\numberwithin{equation}{section}
\numberwithin{figure}{section}
\numberwithin{table}{section}
\theoremstyle{plain}
\newtheorem{thm}{Theorem}[section]
\newtheorem{lem}[thm]{Lemma}
\newtheorem{defn}[thm]{Definition}
\theoremstyle{definition}
\newcommand{\dB}{ \mathds B}
\newcommand{\R}{ \mathds R}
\newcommand{\dS}{ \mathds S}
\newcommand{\dU}{ \mathds U}
\newcommand{\dX}{ \mathds X}
\newcommand{\dY}{ \mathds Y}
\newcommand{\dV}{ \mathds V}
\newcommand{\dW}{ \mathds W}
\newcommand{\fd}{ \mathfrak d}
\newcommand{\fg}{ \mathfrak g}
\newcommand{\fS}{ \mathfrak S}
\newcommand{\gT}{ \boldsymbol\tau}
\newcommand{\gN}{ \boldsymbol\nu}
\newcommand{\gM}{ \boldsymbol\mu}
\theoremstyle{remark}
\newtheorem{rem}[thm]{Remark}
\newcommand{\ol}[1]{\left\langle #1\right\rangle}
\newcommand{\bkt}[1]{\left\llbracket #1\right\rrbracket}
\newcommand{\ul}[1]{\underline{ #1}}
\renewcommand{\le}{\leqslant}
\renewcommand{\ge}{\geqslant}
\newcommand{\eps}{\varepsilon}
\newcommand{\norm}[1]{\left\Vert#1\right\Vert}
\newcommand{\iid}{\emph{i.i.d.}~}
\let\ga=\alpha  \let\gc=\gamma  
\let\gf=\varphi    \let\gk=\kappa \let\gl=\lambda 
 \let\go=\omega   \let\gs=\sigma \let\gt=\tau 
\let\gC=\Gamma   \let\gL=\Lambda \let\gTh=\Theta
\let\gO=\Omega           
\newcommand{\cD}{\mathcal{D}}\newcommand{\cF}{\mathcal{F}}
\newcommand{\cI}{\mathcal{I}}
\newcommand{\cK}{\mathcal{K}}\newcommand{\cL}{\mathcal{L}}
\newcommand{\cM}{\mathcal{M}}
\newcommand{\cT}{\mathcal{T}}
\newcommand{\V}[1]{\ensuremath{\boldsymbol{#1}}\xspace}
\newcommand{\bre}[1]{\ensuremath{\breve{#1}}}
\newcommand{\vone}{\mathbf{1}}
\newcommand{\vA}{\mathbf{A}}
\newcommand{\vG}{\mathbf{G}}
\newcommand{\vP}{\mathbf{P}}\newcommand{\vQ}{\mathbf{Q}}
\newcommand{\vU}{\mathbf{U}}\newcommand{\vW}{\mathbf{W}}
\newcommand{\vX}{\mathbf{X}}
\newcommand{\dN}{\mathds{N}}
\newcommand{\bsP}{\pmb{\mathscr{P}}}
\newcommand{\bsQ}{\pmb{\mathscr{Q}}}
\newcommand{\bsA}{\pmb{\mathscr{A}}}
\newcommand{\sH}{\mathscr{H}}
\DeclareMathOperator{\E}{\mathds{E}}
\DeclareMathOperator{\pr}{\mathds{P}}
\newcommand{\sM}{\mathscr{M}}
\def\beq{ \begin{equation} }
 \def\eeq{ \end{equation} }
 \def\beqx{ \begin{equation*} }
 \def\eeqx{ \end{equation*} }
 \def\beqa{\begin{eqnarray}}
 \def\eeqa{\end{eqnarray}}
 \def\beqax{\begin{eqnarray*}}
 \def\eeqax{\end{eqnarray*}}
\newcommand{\til}{\tilde}
\DeclareSymbolFont{tipa}{T3}{cmr}{m}{n}
\DeclareMathAccent{\inv}{\mathalpha}{tipa}{16}
\begin{document}

\begin{frontmatter}
\title{{\small Consistent~detection~and~optimal~localization of all detectable change-points in piecewise stationary arbitrarily sparse network-sequences}}
\runtitle{Change Point Detection for Network-valued Time-series}

\begin{aug}

\author{\fnms{Sharmodeep} \snm{Bhattacharyya}\thanksref{m3}\ead[label=e1]{bhattash@science.oregonstate.edu}}
\and
\author{\fnms{Shirshendu} \snm{Chatterjee}\thanksref{m1,t3}\ead[label=e2]{shirshendu@ccny.cuny.edu}}
\and
\author{\fnms{Soumendu Sundar} \snm{Mukherjee}\thanksref{m2, t1}\ead[label=e3]{soumendu041@gmail.com}}

\thankstext{t3}{Supported in part by PSC-CUNY and Simons Foundation}
\thankstext{t1}{Most of this work appeared in the PhD thesis \cite{mukherjee2018thesis} of the SSM. SSM is supported by an INSPIRE Faculty Fellowship from the Department of Science and Technology, Government of India}
\runauthor{Mukherjee, Bhattacharyya and Chatterjee}

\affiliation{City University of New York\thanksmark{m1}}
\affiliation{Indian Statistical Institute, Kolkata\thanksmark{m2}}
\affiliation{Oregon State University\thanksmark{m3}}

\address{Interdisciplinary Statistical Research Unit\\
Indian Statistical Institute\\
Kolkata, West Bengal 700108\\
India\\
\printead{e3}}
\address{Department of Statistics\\
237 Weniger Hall\\
Corvallis, OR, 97331\\
\printead{e1}}
\address{Department of Mathematics\\
North Academic Center 8/133\\
160 Convent Ave\\
New York, NY, 10031\\
\printead{e2}}

\end{aug}

\begin{abstract} 
\indent 
We consider the offline change point detection and localization problem in the context of piecewise stationary networks,
where the observable is a finite sequence of networks. We develop
algorithms involving some suitably modified CUSUM statistics based on 
adaptively trimmed adjacency matrices of the observed
networks for both detection and localization of single or multiple
change points present in the input data. We provide rigorous theoretical analysis and finite sample estimates evaluating the performance
of the proposed methods when the input (finite sequence of networks)
is generated from an inhomogeneous random graph model, where
the change points are characterized by the change in the mean adjacency matrix. We show that the proposed algorithms can detect (resp. localize) all change points, where the change in the expected adjacency matrix is above the minimax detectability (resp. localizability)
threshold, consistently without any a priori assumption about (a) a
lower bound for the sparsity of the underlying networks, (b) an upper
bound for the number of change points, and (c) a lower bound for
the separation between successive change points, provided either the
minimum separation between successive pairs of change points or the
average degree of the underlying networks goes to infinity arbitrarily
slowly. We also prove that the above condition is necessary to have
consistency. Finally, we evaluate the performance and complexity of our methods empirically using
simulated data sets, and demonstrate
the superiority of our algorithms over relevant existing approaches.
\end{abstract}

\begin{keyword}[class=AMS]
\kwd[Primary ]{62H30}
\kwd{62F12}
\kwd[; Secondary ]{91D30}
\end{keyword}

\begin{keyword}
\kwd{Networks}
\kwd{Spectral Clustering}
\kwd{cluster Detection}
\kwd{Dynamic Networks}
\kwd{Squared Adjacency Matrix}
\end{keyword}
\end{frontmatter}

\section{Introduction}
\label{sec:intro}
As network data sets have grown in complexity in the recent decades, so has the prevalence of temporal or time-varying or time-series of networks. Time series of networks have emerged in several different fields of study. Examples of temporal or time series of networks in different fields of study include time-series of social networks \cite{panisson2013fingerprinting, stopczynski2014measuring, rocha2010information}, epidemiological networks \cite{salathe2010high, rocha2011simulated}, animal networks \cite{gates2015controlling, lahiri2007structure},  mobile and online communication networks \cite{krings2012effects, ferraz2015rsc, jacobs2015assembling}, economic networks \cite{popovic2014extraction, zhang2014dynamic}, brain networks \cite{park2013structural, sporns2013structure}, genetic networks \cite{rigbolt2011system} and ecological networks \cite{blonder2012temporal}, to name a few. Analysis of temporal networks in terms of modeling, statistical behavior, dynamics, community detection and change point detection has been investigated in several recent works (see \cite{holme2012temporal, holme2015modern, peixoto2015inferring, sikdar2016time, peixoto2018change} for some review of recent works). In this paper, we shall concentrate on the problem of change point detection for time-series of networks.

Change point detection is a classical problem in statistics going all the way back to the early days of statistical quality control \cite{page1954continuous,page1957problems,girshick1952bayes}. However, the problem of change point detection has gained significant importance and applicability in many fields such as medical diagnostics \cite{yang2006adaptive, staudacher2005new, bosc2003automatic}, gene expression \cite{picard2005statistical, hocking2013learning}, online activity \cite{levy2009detection}, speech and image analysis \cite{harchaoui2009regularized, radke2005image}, climate science \cite{reeves2007review}, finance \cite{bai1998estimating, lavielle2007adaptive} and many more. The problem of change point detection started with detection of change in the mean of normal model \cite{page1954continuous} but since has also been generalized to many different data types such as time-series data \cite{aminikhanghahi2017survey} and multivariate data \cite{brodsky2013nonparametric} as well for detecting change in different parameters of the data distribution such as variance, correlation, density and so on. 

The change point detection problem can be broadly classified into two types.
\begin{enumerate}
	\item \emph{Offline change point detection:} In this type of problem, the whole data sequence is available and the change points are detected within the data sequence. This problem was studied in the beginning by Page (1954) \cite{page1954continuous} and Girshick and Rubin (1952) \cite{girshick1952bayes}.
	\item \emph{Online change point detection:} In this type of problem, the data is available sequentially and the change points are detected based on the available data. The online version was initially studied by Kolmogorov (1950), Shiyarev (1963) \cite{shiryaev1963optimum}, Lorden (1971) \cite{lorden1971procedures} and others.
\end{enumerate}
There is a huge literature on the univariate change point problem and possible solutions. An excellent treatment can be found in the book \cite{brodsky2013nonparametric}. The multivariate versions of the problem are significantly more complex. Some notable works are \cite{zhang2010detecting,siegmund2011detecting,srivastava1986likelihood,james1992asymptotic} in the parametric setting, and \cite{harchaoui2009kernel,lung2011homogeneity,chen2015} in the non-parametric setting.

In this article, we tackle the problem of change point detection in temporal network data, that is one observes a series of networks indexed by time and wishes the check if there is a time-point (so-called change point) when there is a significant change in the structure of these networks. Potential applications are in, for instance, brain imaging, where one has brain scans of individuals collected over time and is looking for abnormalities, ecological networks observed over time, where one wonders if there is a structural change, and so on. The classical CUSUM statistic \citep{page1957problems} for univariate change point problems can be used in the network problem as well, and provides a unified way of constructing estimates of change points. It is also amenable to theoretical analysis because of the averaging structure present. In this paper, we will investigate its theoretical properties in a quite general setup. We stress here that we observe the whole time series ahead of our analysis, this is thus an \emph{offline} or \emph{a posteriori} change point problem. We will not discuss the online version of the problem here, which is also quite interesting.

There has been some recent works on the problem of network change points. For example, \cite{peel2015detecting} postulate a hierarchical random graph model and use a Bayesian procedure to detect change points. \cite{park2013anomaly} use local graph statistics for change point and anomaly detection in dynamic networks. For a survey of techniques used in the related problem of anomaly detection in graphs, see \cite{ranshous2015anomaly}. Most two sample graph tests can be used for the change point problem viewed as a multiple testing problem. For example, an eigenvalue based test for the ER vs SBM problem is worked out in \cite{cape2017kato}. Minimax lower bounds for two sample tests for inhomogeneous Erd\"os-R\'enyi graphs have been derived in \cite{ghoshdastidar2017two}. As we will see later, such lower bounds are closely related to the lower bounds we derive for the change-point problem we consider. Although much empirical work has been done, not much theory can be found, and most theoretical results focus on particular structures or specialized models. An exception is \cite{roy2017change}, where the authors \cite{roy2017change} model networks as a Markov random field and estimate the change point using a penalized pseudo-likelihood and prove its consistency at a near classical (i.e. fixed dimensional) rate under a restricted strong convexity type assumption on the log-pseudo-likelihood. Although their results are in a high-dimensional setting, and allow more complicated node interaction than random graphs with independent edges, the role of network sparsity in their setup is not clear. 

Some recent works \citep{wang2018optimal, bhattacharjee2018change, zhao2019change} propose methods for change point detection in networks generated from block models and graphon models with some theoretical results on the consistency of the detection methods. Graphon estimation based methods \citep{zhao2019change} only work for dense graphs. On the other hand, \citep{bhattacharjee2018change} only consider block-models. The work most closely related to the current paper is \cite{wang2018optimal}. Although they propose an algorithm which is almost minimax optimal, their algorithm requires two independent copies of the network time-series, which severely limits its usefulness. In contrast, we provide two efficient algorithms that are minimax optimal and do not require such restrictive assumptions. Our minimax lower bounds are also much more precise than theirs, including provisions for perturbations of specific ranks.


\subsection{Problem Description} 
\label{sec_intro_problem}
In this section, we formally represent the inference problem that we consider in this paper. Let us consider that the number of layers in the network sequence is given by $T$ ($T \in \dN$), the number of nodes in each network layer is represented by $n$ ($n \in\dN$), and number of change points in the network sequence is given by $K$ ($0\le K<T$). Let us denote, $[n]:=\{1, 2, \ldots, n\}$, and $\dS_n$ be the set of all $n\times n$ symmetric  matrices having entries in $[0,1]$ and zeros on the diagonal. The set of possible change point locations is given by 
\begin{align*}
\dU(K,T) := \{\gT=(\tau_0, \tau_1, \ldots, \tau_{K+1}): 0=\tau_0<\tau_1<\cdots<\tau_K<\tau_{K+1}=T\}, 
\end{align*} 
the set of possible expected adjacency matrices is given by $\dS_n^T:=\{\bsQ=(\vQ^{(1)}, \ldots, \vQ^{(T)}): \vQ^{(t)}\in\dS_n,\ \forall  t\in[T]\}$. For $\bsQ \in\dS_n^T$, let
$\cK(\bsQ) := |\{t\in [T-1]:   \vQ^{(t)} \ne \vQ^{(t+1)}\}|$ be the number of change points in $\bsQ$, and $\cT(\bsQ)\in \dU(\cK(\bsQ),T)$ be the sequence of all the change points of $\bsQ$.  Let the no change point situation is represented by $\dS_{n,0}^T:=\{\bsQ\in\dS_n^T: \cK(\bsQ)= 0\}$, and at least one change point is presented by $\dS_{n,+}^T:=\dS_n^T\setminus \dS_{n,0}^T$.

Now, we describe the statistical hypothesis testing problem in the context of  detecting of change points in a finite sequence of networks. The observable is a sequence $(G_n^{(1)}, \ldots, G_n^{(T)})$, which is represented by the corresponding sequence of adjacency matrices $\bsA:(\vA^{(1)}, \ldots, \vA^{(T)})\in\dS_n^T$,  consisting of $T$ simple undirected networks on the same node set $\{v_1,  \ldots, v_n\}$ having $n$ nodes.  $A^{(t)}_{ij}$ equals 1 (resp.~0) if nodes $v_i$ and $v_j $ are (resp.~not) linked in the $t$-th snapshot $G_n^{(t)}$. Let  $\bsP:=(\vP^{(1)}, \ldots, \vP^{(T)})$, where $\vP^{(t)}:=\E\vA^{(t)}$. So, $\bsP \in \dS_n^T$. We define the hypothesis test in terms of the following pairs of competing composite hypotheses
For $\dX\subset\dS_{n,0}^T$ and $\dY\subset\dS_{n,+}^T$, consider the
\[ H_0:\;\bsP\in\dX; H_1:\; \bsP\in\dY.  
\]
In plain English, we want to test whether the sequence of the mean adjacency matrices belongs to a set $\dX$ having no change point, or to a set $\dY$ having at least one change point. 

In other words, the question is when can one decide reliably whether there is a sequence of change points in $\bsP$ or not. The answer depends on the criterion used for judging the performances of the decision rules. There are mainly two paradigms in statistical decision theory, namely the Bayesian and the minimax approach. We will consider the second approach in this paper. Recall that a nonrandomized test $\gO_{n,T}$ is a measurable function of the observable $\bsA$ taking values in $\{0, 1\}$. The minimax risk of such a test $\gO_{n,T}$ for the hypotheses $H^\dX_0$ and $H^\dY_1$ is
\begin{align}\label{minimax_risk}
\Pi(\gO_{n,T}; \dX,\dY) := 
\overbrace{\sup_{\bsQ\in\dX}\pr_{\bsQ}(\gO_{n,T}=1)}^{Type\ I\ error}+\overbrace{\sup_{\bsQ\in\dY}\pr_{\bsQ}(\gO_{n,T}=0)}^{Type\ II\ error}.
\end{align}
Here and later $\pr_{\bsQ}$  represents the distribution of $\bsA$ when $\E\bsA=\bsQ$. Recall the following definitions for the asymptotic properties of hypothesis test.
\begin{defn}
	A  test $\gO_{n,T}$ is called \emph{asymptotically powerful} for the hypotheses $H_0: \bsP\in\dX$ and $H_1:\bsP\in\dY$, if $\limsup_{nT\to\infty}\Pi(\gO_{n,T};\dX,\dY)$ equals $0$.
	
	A  test $\gO_{n,T}$ is called \emph{asymptotically powerless} for the hypotheses $H_0: \bsP\in\dX$ and $H_1:\bsP\in\dY$, if $\liminf_{nT\to\infty} \Pi(\gO_{n,T};\dX,\dY)$ is at least $1$.
\end{defn}
\begin{defn}[Detectability]
	For any $\dX\in\dS_{n,0}^T$ and $\dY\in\dS_{n,+}^T$, the hypotheses $H_0: \bsP\in\dX$ and $H_1: \bsP\in\dY$ are called 
	\begin{itemize}
		\item {\it consistently distinguishable } if there is an asymptotically powerful test for them.
		\item {\it consistently indistinguishable } if there is no asymptotically powerful test for them.
		\item {\it asymptotically indistinguishable } if all tests for them are asymptotically powerless.
	\end{itemize}
\end{defn}
Our first result (see \textsection \ref{sec_window_theory}) in this paper is to obtain the maximum detectable set. 

Having considered the problem of detectability, next we consider the problem of localizability.  Consider the collection $\cup_{K\in[T-1]} \dU(K,T)$ of sequences of probability matrices having at least one change point.
\begin{defn}
	For $\gL\le T$, a set $\dY\subset\{\bsQ\in\dS_n^T: \cK(\bsQ)\ge 1\}$ is called $\gL$-localizable, if there is an estimator $\breve\gT$ of change points and a constant $c>0$ such that 
	\[ \lim_{nT\to\infty}\inf_{\bsQ\in\dY}\pr_{\bsQ}\left(\cK(\breve\gT)=\cK(\bsQ) \text{ and } \cT_k(\bsQ)  \in \breve\tau_k\pm c\gL \text{ for all } k\in[\cK(\gT)]\right) = 1. \]
\end{defn}
In our second result (see \textsection \ref{sec_window_theory} and \textsection \ref{sec:wbs}), we have determined which subsets of $\dS_n^T$ are  $\gL$-localizable.

\subsection{Our results}
\label{sec_intro_results}
To determine detectability of a given change point detection problem, one needs to measure the amount  of change among the expected observables. In the current case, there are three main quantitative variables that determine the complexity of a change point detection problem, namely  (a) the gaps between successive change points,  (b) the differences of the probability matrices at the change points, and (c) the level of sparsity of the probability matrices.
Researchers have used one of the standard matrix norms for quantifying (b). In this paper, we will use the spectral norm for (b) (denoted by $||\cdot||$). For $\bsQ\in\dS_n^T$, define the three quantities as:
\begin{align*}
(a)\ & \text{Cushion: }  \fg(\bsQ) := \min_{k\in[\cK(\bsQ)+1]}(\cT_k(\bsQ)-\cT_{k-1}(\bsQ)),\\
(b)\ & \text{Signal: } \fS(\bsQ) := \begin{cases}
\min_{k\in[\cK(\bsQ)]} \norm{\vQ^{(\tau_k+1)}-\vQ^{(\tau_k)}} & \text{ if } \cK(\bsQ)\ge 1\\
0 & \text{ if } \cK(\bsQ)=0\end{cases},\\
(c)\ & \text{Sparsity: }  \cD(\bsQ):=\max_{i,j\in[n], t\in[T]} Q^{(t)}_{ij}.
\end{align*}
Also let $\underline{S}_n^T$ be the maximum subset of $\dS_n^T$ such that
\[ \inf_{\bsQ\in\underline{S}_n^T} \fg(\bsQ)\cD(\bsQ) \in \go(1) \text{ as $n$ or $T$ or both goes to $\infty$.}\]
\begin{itemize}
	\item We have shown (see Algorithm 1 for an asymptotically powerful test procedure and Theorem \ref{multiple tau detect} for the precise statement) that if
	\[ \dY=\left\{\bsQ\in\dS_n^T: \fS(\bsQ)\gtrsim \sqrt{\frac{\cD(\bsQ)}{\fg(\bsQ)}, \bsQ\in \underline{S}_n^T} \right\},\]
	\[ \dV=\left\{\bsQ\in\dS_n^T: \fS(\bsQ)\lesssim \sqrt{\frac{\cD(\bsQ)}{\fg(\bsQ)} \text{ or } \bsQ\not\in \underline{S}_n^T} \right\},\]
	$\tilde{\dV}$ is the analogue of $\dV$ with ``$\lesssim$" replaced by $\ll$, and $\vX=\dS_{n,0}^T$, then 
	\begin{align*}
	H_0: \bsP\in\dX \text{ and } H_1: \bsP\in\dY \text{ are  consistently distinguishable} , \\
	H_0: \bsP\in\dX \text{ and } H_1: \bsP\in\dV \text{ are  consistently indistinguishable} , \\
	H_0: \bsP\in\dX \text{ and } H_1: \bsP\in\tilde{\dV} \text{ are  asymptotically indistinguishable} 
	\end{align*}
	
	\item We have also shown that for any $\gL\le T$, 
	\[ \left\{\bsQ\in\dS_n^T: \fS(\bsQ)\gtrsim \sqrt{\frac{\cD(\bsQ)}{\gL\wedge\fg(\bsQ)}, \bsQ\in \underline{S}_n^T} \right\}\]
	is $\gL'$-localizable for all $\gL'\ge\gL$.
	Thus, if we restrict our consideration to $\{\vQ\in\dS_n^T: \cD(\vQ)=d_0, \fg(\vQ)=\gk_0\}$, then the detectability threshold is $\sqrt{d_0/\gk_0}$ if $d_0\gk_0\in\go(1)$. See Theorem \ref{multiple tau local} for the precise statement
\end{itemize}

\subsection{Outline of the paper}
\label{sec_outline}
	The remainder of the paper is organized as follows. In \textsection \ref{sec:methods}, we describe the change point detection algorithms proposed in the paper. In \textsection \ref{sec:theory}, we state the theoretical results regarding the performance of the proposed change point detection algorithms and the detectability threshold of change points for a large class of probability distributions. 

\section{Change Point Detection Methods}
\label{sec:methods}
\subsection{Notations}
\label{sec_notation}

Let $[n] := \{1, 2, \ldots, n\}$ for $n\in\dN$, $\sM_{m,n}$ be the set of all $m\times n$ matrices which have exactly one 1 and $n-1$ 0's in each row. $\R^{m\times n}$ denotes the set of all $m\times n$ real matrices. $||\cdot||_2$ is used to denote Euclidean $\ell_2$-norm for vectors in $\R^{m\times 1}$. $||\cdot||$ is the spectral norm on $\R^{m\times n}$. $||\cdot||_F$ is the Frobenius norm on $\R^{m\times n}$, namely $||M||_F := \sqrt{trace(M^T M)}$. $\vone_{m} \in \R^{m\times 1}$ consists of all 1's, $\mathbf 1_A$ denotes the indicator function of the event $A$. 
If $\vA\in\R^{m\times n}$,  $I\subset [m]$ and $j\in [n]$, then $\vA_{I,j}$ (resp.~$\vA_{I,*}$)  denotes the submatrix  of $\vA$ corresponding to row index set $I$ and column index $j$ (resp.~index set $[n]$).  $\gl_i(\vW), i\in[n],$ will denote the $i$-th largest eigenvalue of $\vW\in\R^{n\times n}$.
$\bkt{X}:=X-\E(X)$ for any random variable or random matrix $X$.

\subsection{Network Data and Model}
\label{sec:algo}
We consider the setup where one observes a sequence of $T$ networks, $()G_n^{(t)})_{t=1}^T$, with adjacency matrices $(\vA^{(1)}, \vA^{(2)}, \ldots, \vA^{(T)})$ on the same set of nodes $\{v_1, v_2, \ldots, v_n\}$. For each $t\in\{1, \ldots, T\}$, the $t$-th network $G_n^{(t)}$ is represented by the corresponding adjacency matrix $\vA^{(t)}_{n\times n}$ whose elements are $A^{(t)}_{ij}\in\{0,1\}$. $A^{(t)}_{ij} = 1$ if node $v_i$ is linked to node $v_j$ at time $t$, and $A^{(t)}_{ij} = 0$ otherwise. Thus, the numerical data for the community detection problem consists of $T\ge 1$ adjacency matrices $\left(\vA^{(1)}_{n\times n}, \ldots, \vA^{(T)}_{n\times n}\right)$. We shall only consider undirected and unweighted graphs in this paper. However, the conclusions of the paper can be extended to positively weighted graphs with non-random weights in a quite straightforward way by considering weighted adjacency matrices. The theoretical analysis in this paper can be easily extended to positively weighted adjacency matrices.

We consider that the set of adjacency matrices are generated independently from an \emph{inhomogeneous random graph model}. 
\begin{defn}[Multilayer Inhomogeneous Random Graph Model (MIRGraM)] \label{defn_mirgram}
	A sequence of $T$ ($T\in \dN$) symmetric adjacency matrices $(\vA^{(1)}, \vA^{(2)}, \ldots, \vA^{(T)})$ follows \emph{Multilayer Inhomogeneous Random Graph Model (MIRGraM)} with parameters $\bsP = (\vP^{(1)}, \ldots, \vP^{(T)})$, each of size $n\times n$, if,
	\begin{equation}
	\label{eq_irgm}
	A^{(t)}_{ij} \stackrel{iid}{\sim} \text{Ber}(P^{(t)}_{ij}). 
	\end{equation}
	for $i > j$ with $i,j=1, \ldots, n$ and $t=1, \ldots, T$. Lastly, $\vA^{(t)} = (\vA^{(t)})^T$ and $\text{diag}(\vA^{(t)}) = \V{0}$ for each $t=1, \ldots, T$.
\end{defn}

We consider the setup where, $\bsP$ changes at certain instances, like $(\tau_1, \ldots, \tau_{K})$, where $1 = \tau_0 < \tau_1 < \cdots < \tau_K < \tau_{K+1} = T$. Our goal becomes estimating the instances of change-point $(\tau_1, \ldots, \tau_K)$. We consider the problem of \emph{multiple change-point detection and localization.}

In this problem we consider that $K \geq 1$, so we have to estimate multiple change-points $(\tau_1, \ldots, \tau_K)$ both in form of a point estimator (\emph{detection}) and interval estimator (\emph{localization}). We propose two different algorithms to address multiple change-point detection problem.
	\begin{enumerate}[(i)]
		\item We develop a window-based algorithm (Algorithm 1) to detect and localize the multiple change-points. The formal setup and algorithm is given in \textsection \ref{sec_multiple}.
		\item We develop a wild binary segmentation algorithm (Algorithm 2) motivated by \cite{fryzlewicz2014wild}. The formal setup and algorithm is given in \textsection \ref{sec_wbs}.
	\end{enumerate}
We will now go into the details of algorithms and the setup.

\subsection{Multiple change-point detection and localization}
\label{sec_multiple}
The problem of multiple change-point detection also starts with a sequence of networks represented by the symmetric adjacency matrices $\left(\vA^{(1)}_{n\times n}, \ldots, \vA^{(T)}_{n\times n}\right)$. The model under a multiple change-point with parameters $\bsP = \left(\vP^{(1)}, \ldots, \vP^{(T)}\right)$ can be framed as -
\begin{enumerate}[(i)]
	\item if $\vP^{(t)} = \vP$, for all $t \in [T]$, there exists no change-point;
	\item if there exists a set of change-points $(\tau_1, \tau_2, \ldots, \tau_K)$ with $\tau_0 := 0  \le \tau_1 \le \cdots \le \tau_K \le T =: \tau_{K+1}$ such that 
	\[\vP^{(t)} = \vQ_k \ \  \text{for } \tau_{k-1} \le t \le \tau_k,\ k\in [K]\]
	where, $\vQ_k \in \dS_n$ for $k\in [K]$, then there exists a multiple change-points $(\tau_1, \tau_2, \ldots, \tau_K)$. 
	\item We additionally assume  $\fg(\bsP)$ represents the \emph{minimum cushion} at the boundaries of the network sequence as well as the \emph{minimum cushion} between successive change points, thus satisfying the relationship
	\beq\label{cushion_def}
	\fg(\bsP) := \min_{0\le k\le K} (\tau_{k+1}-\tau_k)
	\eeq
\end{enumerate}
\emph{The goal then becomes declaring absence of change-point when case (i) is true and estimating $(\tau_1, \tau_2, \ldots, \tau_K)$ when the case (ii) is true.}

Let $\bar D =\frac{1}{nT}\sum_{i,j\in[n], s\in[T]} \vA^{(s)}_{i,j}$ be the sample average degree of a node over all layers and for some constant $\mu > 0$, define $\ul\fd_\mu:=\frac{\bar D}{1+\sqrt{4\mu}}$ as a normalized version of the average degree, $\bar D$. The degree of each of the vertices average over all layers is given by
\[ D_{i} = \frac{1}{T} \sum_{j \in [n], s\in [T]} \vA^{(s)}_{i,j}. \ \ \ \text{for } i \in [n].\] 
Define the function $\gf(z):=z\log(z)-z+1$~($\gf:[1,\infty)\mapsto[0,\infty))$. We need to define the following population quantities based on the parameters of MIRGraM, which are given in Definition \ref{defn_mirgram},
\begin{align}
d:=n\left(\max_{i,j\in[n],t\in[T]}\E A^{(t)}_{ij}\right), \quad \fd:=\max_{i\in[n],t\in[T]}\sum_{j\in[n]}\E A^{(t)}_{ij}. \label{d def} 
\end{align}
Let $\cL$ be a collection of intervals, each of length $\gL$. Let $\gk$ be the cushion at the boundary of each interval in $\cL$. For any $\ell\in[|\cL|]$, let $T_\ell$ be the set of left end-points of intervals in $\cL$. Then, let us define,
\begin{align}
\left[\begin{array}{c}
\eps_\mu\\ \Psi_\mu\\ \eta\end{array}\right] :=\begin{cases}
\left[\frac 16 , \gf^{-1}\left(\frac{(2\log|\cL|)\vee (3(\gL\wedge\gk)\fd)}{(\frac 13-\eps_\mu)(\gL\wedge\gk)\fd}\right), 1\right]^T,\text{if $|\cL|\not\in (e, e^n)$, } 
\\
\left[\frac{2\eta-2}{6\eta-3},  \gf^{-1}\left(\frac{\frac{\log\log( |\cL|)}{(\gL\wedge\gk)\fd}\vee 3}{\frac 23 - 2\eps_\mu}\right), \frac{\log(n)}{\log\log|\cL|}\right]^T,
\text{otherwise,} 
\end{cases}\label{Psi def_gen}\\
\gC:=\begin{cases}
\left\lceil \frac{25n}{4}\exp\left(-\frac{3\eps_\mu}{2-6\eps_\mu} \left[\left(\log\log|\cL|\right)\vee\frac{3(\gL\wedge\gk)\bar D}{1+\sqrt{4\mu}}\right]\right)\right\rceil, \text{ if } |\cL|\in(e, e^n), \\
\left\lceil \frac{25n}{4}\exp\left(-\frac{3\eps_\mu}{1-3\eps_\mu}\left[\left(2\log|\cL|\right)\vee \frac{3(\gL\wedge\gk)\bar D}{1+\sqrt{4\mu}}\right] \right)\right\rceil, \text{otherwise}.
\end{cases} \label{gC def_gen}
\end{align}

A natural statistic is based on the cumulative averages (\emph{cusum}) of estimates of the $\vP^{(t)}$'s. Such cusum statistics are widely used in change-point detection problems \cite{brodsky2013nonparametric}. A natural estimate of $\vP^{(t)}$ is $\vA^{(t)}$ for any $t\in [T]$. However for sparse networks, $\vA^{(t)}$ is not a good enough estimate of $\vP^{(t)}$ under operator norm. Hence given $\gL$ and $\gk$, for each interval $\cI \in \cL$, we obtain submatrices $\tilde \vA^{(t)}$ of $\vA^{(t)}$ for each $t \in [T]$ by removing some high degree vertices using the threshold $\gC$, where $\gC$ is defined in \eqref{gC def_gen}. The thresholding uses the ordered vertices in terms of degree of each vertex, $D_i$, such that $D_{(1)} \leq D_{(2)} \leq \cdots \leq D_{(n)}$, and vertices corresponding to $D_{(j)}$, such that $j > (n+1-\gC)$, are pruned. The thresholded matrices $(\tilde \vA^{(t)})_{t=1}^T$ are better approximations of $(\vP^{(t)})_{t=1}^T$ in terms of operator norm
 We use $(\tilde \vA^{(t)})_{t=1}^T$ to construct the cusum statistics, $\vG^{(t)}$, for interval $\cI_{\ell} \in \cL$,
\begin{align}\label{Gt_def}
\vG^{(t)} := \sqrt{\frac{t}{\gL}\left(1 - \frac{t}{\gL}\right)} \left(\frac{1}{t} \sum_{i = T_{\ell}+1}^t \tilde\vA^{(i)}- \frac{1}{\gL - t} \sum_{i = T_{\ell}+t+1}^{T_{\ell}+\gL} \tilde\vA^{(i)}\right) 
\end{align}
$\text{ for } T_{\ell}+\lfloor\gk/3+1\rfloor\le t\le T_{\ell}+\gL-\lfloor\gk/3\rfloor.$

\subsubsection{Window-based Algorithm}
\label{sec_multiple_window}
In order to build the \emph{window-based algorithm}, Algorithm 1, for multiple change-point detection, we consider the set of intervals -
\begin{align}
\label{sJ_ell def}
\cL := \{\cI_{\ell}\} \text{ where, } \cI_{\ell}:= (T_{\ell}, (T_{\ell}+\gL) \wedge T],  \text{ and } \\
T_{\ell} = (\ell - 1)\lfloor\gL/3\rfloor \text{ for } \ell \in \{1, 2, \ldots, \lceil3T/\gL\rceil -2\}. \nonumber
\end{align}
%
Along with the set of intervals $\cL$, there are other components of Algorithm 1.
\begin{enumerate}
	\item Cushion $\gk$ and interval length $\gL$ are two tuning parameters of Algorithm 1. For fixed $\gk$ and $\gL$, we consider the set of intervals $\cL$ as defined in \eqref{sJ_ell def}. In Algorithm 1, we consider that $\gL \leq \gk$, so, one can vary $\gL$ from $3$ to $\gk$ for a given $\gk$ to find the intervals of shortest length with a change point. 
	\item For each $\cI_\ell \in \cL$, $\norm{\vG^{(t)}}$ is minimized to get a candidate of change-point estimate, $\til \tau$.
	\item If $\norm{\vG^{(t)}}$ is greater than a data-dependent threshold as defined in Algorithm 1, then $\til \tau$ is change-point within localized interval $\cI_{\ell}$, otherwise not.
\end{enumerate}

\vspace{0.2in}
\framebox[\textwidth]
{\centering\parbox{.95\textwidth}{
		\textbf{Algorithm 1:} Window-based Change Point Detection \\
		\textbf{Input:} Adjacency matrices $\vA^{(1)}, \vA^{(2)}, \ldots, \vA^{(T)}$; cushion $\gk$.\\
		\textbf{Output:} Change point estimates with localized interval lengths and intervals $(\hat\tau, \gL, \cI)$. 
		\begin{enumerate}
			\item[] For $\gL = \gk, \gk-1, \ldots, 3$ do
			\item[] \quad Obtain $\bar D=\frac{1}{nT}
			\sum_{i,j\in[n], s\in[T]}A^{(s)}_{ij}$.
			\item[] \quad Obtain $\cL$ as defined in equation \eqref{sJ_ell def}.
			\item[] \quad Obtain $\gC$ as defined in equation \eqref{gC def_gen}.
			\item[] \quad For $\ell = 1, 2, \ldots, \lceil3T/\gL\rceil -2$ do
			\item[] \quad \quad Define $T_{\ell}$ and $\cI_{\ell}$ as in \eqref{sJ_ell def}. 
			\item[] \quad \quad For $i=1, 2, \ldots, n$ do
			\item[] \quad \quad \quad Obtain $D_{i}$.
			\item[] \quad \quad Order the values $D_{1}, \ldots, D_{n}$ to get 
			$D_{(1)}\le \cdots\le D_{(n)}$.
			\item[] \quad \quad Obtain row indices $i_1, \ldots, i_\gC$ such that $D_{i_k}\ge D_{(n+1-\gC)}$
			\item[] \quad \quad Obtain $\tilde \vA^{(s)}$ from $\vA^{(s)}$ for each $s\in\cI_{\ell}$ by removing rows and columns with indices $i_1, \ldots, i_\gC$. 
			\item[] \quad \quad For $t=T_{\ell}+\lfloor\gL/3\rfloor+1, \ldots, T_{\ell}+\gL-\lfloor\gL/3\rfloor$ do
			\item[] \quad \quad \quad Obtain $\vG^{(t)}$ as in \eqref{Gt_def}.
			\item[] \quad \quad Obtain $u=\arg\max_{t\in(T_{\ell}+\lfloor\gL/3\rfloor, T_{\ell}+\gL-\lfloor\gL/3\rfloor]}\norm{\vG^{(t)}}$.
			\item[] \quad \quad If $\norm{\vG^{(u)}}>\Theta_\mu\left[\frac{\bar D}{\gL}
			\left(\zeta+6+\frac{\log(|\cL|)}{\log(n)}\right)\right]^{1/2}$, declare $u$ as a change point in interval $\cI_{\ell}$ of length $\gL$. 
			\item[] Return the detected change-points corresponding to all interval lengths $\gL$ and interval $\cI_{\ell}$ as $(u, \gL, \cI_{\ell})$.
		\end{enumerate}
}}\\

\subsubsection{Wild Binary Segmentation}
\label{sec_wbs}
Wild binary segmentation is a randomized algorithm for multiple change-point detection \cite{fryzlewicz2014wild}. Let us denote $\cF_T^M$ as a set of $M$ random intervals
$[s_m, e_m]$, $m = 1, \ldots,M$, whose start and end points have been drawn (independently with replacement) uniformly from the set $\{1, \ldots, T\}$ with the property $e_m > s_m$. Note that,
\emph{wild binary segmentation} algorithm \cite{fryzlewicz2014wild} is a recursive one and consequently Algorithm 2 has also been written in a recursive format.
The main components of Algorithm 3 are given below.
\begin{enumerate}
	\item In a specific iteration, Algorithm 2 operates on an interval $(s, e)$ (where, $(s, e) \subseteq [1, T]$). Let $\cM_{s, e}$ be the set of indices, $m \in [M]$, of intervals in $\cF_T^M$, such that, $(s_m, e_m) \subseteq (s, e)$. 
	\item Cushion $\gk$ is the tuning parameter of the algorithm. We consider $\gk$ is given to the algorithm at any specific iteration.
	\item For each $\cI = (s_m, e_m)$ such that $m \in \cM_{s, e}$, say $\norm{\vG^{(t)}}$ is minimized at $t = u_m$ and if $\norm{\vG^{(u_m)}}$ is greater than a data-dependent threshold as defined in Algorithm 2, then $u_m$ becomes a candidate of change-point estimate.
	\item The $\norm{\vG^{(u_m)}}$ is minimized over all $m \in \cM_{s, e}$ and say the minimization occurs at $m^*$. Then $u_0 := u_{m^*}$ is declared as a change-point.
	\item Algorithm again recursively starts for the two intervals $(s, u_0)$ and $(u_0+1, e)$.
\end{enumerate}

\vspace{0.2in}
\framebox[\textwidth]
{\centering\parbox{.95\textwidth}{
		\textbf{Algorithm 2:} Wild Binary Segmentation $(s, e, \gk)$ \\
		\textbf{Input:} Adjacency matrices $\vA^{(s)}, \vA^{(s+1)}, \ldots, \vA^{(e)}$; cushion $\gk$.\\
		\textbf{Output:} Change point estimate $\hat\tau$. \\
		\begin{enumerate}
			\item If $ e - s < \gk $ then
			\item \quad STOP
			\item else
			\item \quad $\cM_{s,e} :=$ set of those indices $m$ for which $[s_m, e_m] \in \cF_T^M$ is such that $[s_m, e_m]\subseteq [s,e]$.
			\item \quad Define $\gL = e_m - s_m$.
			\item \quad If $e_m -s_m < \gk$, then,
			\item \quad \quad CONTINUE
			\item \quad else
			\item \quad \quad For $i=1, 2, \ldots, n$ do
			\item \quad \quad\quad\quad Obtain degree of each vertex, $\{D_{i,m}\}_{i=1}^n$.
			\item \quad \quad Order the values $D_{1,m}, \ldots, D_{n,m}$ to get 
			$D_{(1),m}\le \cdots\le D_{(n),m}$.
			\item \quad\quad  Obtain row indices $i_1, \ldots, i_\gC$ such that $D_{i_k,m}\ge D_{(n+1-\gC),m}$
			\item \quad\quad  Obtain $\tilde \vA^{(s)}$ from $\vA^{(s)}$ for each $s\in(s_m, e_m)$ by removing rows and columns with indices $i_1, \ldots, i_\gC$. 
			\item \quad \quad For $t=s_m + \frac 13 \gk, s_m + \frac 13 \gk +1, \ldots, e_m - \frac 13 \gk$ do
			\item \quad\quad\quad \quad  Obtain $c^{(t)}_m:=\sqrt{\frac{t}{(e_m - s_m)}\left(1 - \frac{t}{(e_m - s_m)}\right)}$.
			\item \quad\quad\quad \quad  Obtain $\vG^{(t)}_m:=c^{(t)}_m\times \left[\frac 1t\sum_{s\in(s_m,t]}\tilde\vA^{(s)} - \frac{1}{e_m - t}\sum_{s\in(t,e_m]}\tilde\vA^{(s)}\right]$.
			\item \quad \quad Obtain $u_m=\arg\max_{t\in(s_m + \frac 13 \gk, e_m - \frac 13 \gk)} \norm{\vG^{(t)}_m}$
			\item \quad\quad  If $\norm{\vG_m^{(u_m)}}>\Theta_\mu\left[\frac{\bar D}{\gk}
			\left(\zeta+6+\frac{\log(M)}{\log(n)}\right)\right]^{1/2}$, declare $u_m$ as a \emph{candidate change point}.  
			\item \quad end if
			\item \quad $u_0=\arg\max_{m \in \cM_{s,e}} \norm{\vG^{(t)}_m}$
			\item \quad Add $u_0$ to the set of estimated change-points
			\item \quad Wild Binary Segmentation $(s, u_0, \gk)$
			\item \quad Wild Binary Segmentation $(u_0 + 1, e, \gk)$
			\item \quad else
			\item \quad \quad STOP
			\item \quad end if
			\item end if
		\end{enumerate}
}}\\

\section{Theory}
\label{sec:theory}

\subsection{Lower Bound}
\label{sec_lower_bound}
Let $\pr_\Theta$ denote the distribution of the inhomogeneous random graph having mean adjacency matrix $\Theta$.
For $\rho, \ga\in(0,1), \gk\in[T]$,  and (possibly random) symmetric matrix $\gC$ taking values in $ [-1, 1]^{n\times n}$, let 
\begin{align}
& \Theta_0=\rho\ol{\mathbf 1_n\mathbf 1_n^T}, 
\pr_0^\gk:=\pr_{\Theta_0}\overbrace{\times\cdots\times}^{\gk \text{ times}}\pr_{\Theta_0},  \Theta(\gC,\ga)=\Theta_0+\ga\rho\ol{\gC}, \label{P kappa def}\\
& \pr_{\gC,\ga}^\gk:=\pr_{\Theta(\gC,\ga)}\overbrace{\times\cdots\times}^{\gk \text{ times}}\pr_{\Theta(\gC,\ga)}, 
\pr_{1,\ga}^\gk(\cdot)=\E_\gC \pr_{\gC,\ga}^\gk(\cdot), \notag
\end{align}
where $\pr_\gC$ denotes the distribution of $\gC$.
\begin{lem}[Chi-square divergence bound] \label{ChiSqBd}
	Let $\bre\pr_\gC$ denote the distribution of $\gC\in\dS_n$, where $(\gC_{ij}, 1\le i<j\le n)$ are \iid with finite second moment. For any $\dB\subset\dS_n$ satisfying $\bre\pr_\gC(\dB)\ge 1/2$, let $\pr_\gC(\cdot)=\bre\pr_\gC(\cdot| \dB)$.
	For any $\eps>0$, there is a constant $c(\eps)>0$ such that $\chi^2(\pr_0^\gk, \pr_{1,\ga}^\gk)\le \eps$ whenever $\gk\ga^2\rho n \le c(1-\rho)$.
\end{lem}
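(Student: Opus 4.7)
The strategy is to exploit the mixture structure of the alternative $\pr_{1,\ga}^\gk = \E_\gC \pr_{\gC,\ga}^\gk$ together with the product (independent-edge, independent-layer) structure of the null and each conditional alternative, reducing the bound to a moment-generating-function estimate for a sum of i.i.d.\ edge products. First, I would invoke the standard identity
\[
\chi^2(\pr_0^\gk, \pr_{1,\ga}^\gk) + 1 = \E_{\gC, \gC'}\!\left[\int \frac{d\pr_{\gC,\ga}^\gk \, d\pr_{\gC',\ga}^\gk}{d\pr_0^\gk}\right],
\]
where $\gC, \gC'$ are independent with law $\pr_\gC$. Because the three measures are $\gk$-fold products over $\gk$ layers, each itself a product over independent Bernoulli edges, the inner integral factorizes over $i < j$ and the layers. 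A direct calculation for two Bernoullis with success probabilities $\rho + \gd_1, \rho + \gd_2$ and base rate $\rho$ yields $\sum_x p_1(x)p_2(x)/p_0(x) = 1 + \gd_1 \gd_2/(\rho(1-\rho))$; substituting $\gd_m = \ga\rho\gC_{ij}$ (resp.\ $\ga\rho\gC'_{ij}$) gives
\[
\int \frac{d\pr_{\gC,\ga}^\gk \, d\pr_{\gC',\ga}^\gk}{d\pr_0^\gk} = \prod_{i<j}\left(1 + \frac{\ga^2\rho\, \gC_{ij} \gC'_{ij}}{1-\rho}\right)^{\!\gk}.
\]

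Second, using $\log(1+x) \le x$ for $x > -1$ and writing $c := \gk\ga^2\rho/(1-\rho)$, I would bound the product by $\exp\!\big(c\sum_{i<j}\gC_{ij}\gC'_{ij}\big)$. The main task is then to estimate $\E_{\gC,\gC'}\exp\!\big(c\sum_{i<j}\gC_{ij}\gC'_{ij}\big)$. The natural move is to pass from the conditional law $\pr_\gC$ to the i.i.d.\ law $\bre\pr_\gC$, under which the entries are independent and one may factorize the MGF. Assuming the natural normalization $\bre\E \gC_{ij} = 0$ (without which the alternative is trivially detectable through its mean), the variables $(\gC_{ij}\gC'_{ij})_{i<j}$ are i.i.d., mean zero, bounded in $[-1,1]$, and have variance $\gs^4 \le 1$ where $\gs^2 := \bre\E \gC_{ij}^2$. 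Using $e^x \le 1 + x + x^2$ for $|x| \le 1$ (ensured once $c(\eps)$ is taken sufficiently small that $c \le 1$), one gets $\bre\E e^{c \gC_{ij}\gC'_{ij}} \le 1 + c^2 \gs^4$, so the product over $\binom n 2$ edges is at most $\exp(c^2 \gs^4 n^2/2)$. Because the hypothesis $\gk\ga^2\rho n \le c(\eps)(1-\rho)$ is exactly the statement $cn \le c(\eps)$, this bound becomes $\exp(c(\eps)^2/2)$, which is $\le 1 + \eps$ for $c(\eps)$ small.

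Third, the step I expect to be the main obstacle is the transfer from $\bre\pr_\gC$ back to the conditional $\pr_\gC = \bre\pr_\gC(\cdot \mid \dB)$. The crude density bound $d\pr_\gC/d\bre\pr_\gC \le 2$ costs a factor of $4$ and only yields $\chi^2 + 1 \le 4\exp(c(\eps)^2/2)$, which is insufficient for small $\eps$. A sharper analysis is needed; my plan is to use the identity
\[
\chi^2(\pr_0^\gk,\pr_{1,\ga}^\gk) = \frac{\bre\E\bigl[(\Lambda-1)\,\mathbf{1}_{\gC \in \dB}\,\mathbf{1}_{\gC' \in \dB}\bigr]}{\bre\pr_\gC(\dB)^2}, \qquad \Lambda := \prod_{i<j}\!\Bigl(1 + \tfrac{\ga^2\rho\gC_{ij}\gC'_{ij}}{1-\rho}\Bigr)^{\!\gk},
\]
which follows from the cancellation $\bre\E[\mathbf{1}_\dB(\gC)\mathbf{1}_\dB(\gC')] = \bre\pr_\gC(\dB)^2$ by independence of $\gC, \gC'$. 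I would split the numerator into the unconditional $\bre\E(\Lambda-1) = \chi^2_{\bre}$ (bounded via Step 2) and a tail correction $\bre\E[(\Lambda-1)(1 - \mathbf{1}_\dB\mathbf{1}_{\dB'})]$. Using Cauchy–Schwarz, the latter is controlled by $\sqrt{\bre\E(\Lambda-1)^2}\sqrt{\bre\pr(\dB^c \cup (\dB')^c)}$; the first factor is the fourth-moment analogue of the chi-square, which can be bounded by the same product/edge argument (yielding a similar smallness with $c$ replaced by $2c$), and thereby dominates the correction so that the final bound has no spurious factor of $4$. Choosing $c(\eps)$ suitably small then completes the proof.
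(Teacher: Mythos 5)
Your proposal is correct, and it reaches the stated bound by a genuinely different route at the one step that actually requires care. Your Steps 1--2 coincide with the paper: the same second-moment (Ingster--Suslina) identity with two independent copies $\gC,\gC'$, the same per-edge affinity $1+\ga^2\rho\gC_{ij}\gC'_{ij}/(1-\rho)$, and the same bound of the product by $\exp(cF)$ with $F=\sum_{i<j}\gC_{ij}\gC'_{ij}$ and $c=\gk\ga^2\rho/(1-\rho)$. The divergence is in how $\E\exp(cF)$ is controlled under the conditional law $\pr_\gC=\bre\pr_\gC(\cdot\mid\dB)$. The paper never factorizes the moment generating function: it applies Bernstein's inequality to get $\pr_{\gC\otimes\gC'}(F>\ell\gs^2 n)\le 4\,\bre\pr_{\gC\otimes\gC'}(F>\ell\gs^2 n)\le e^{-3\ell^2/7}$ --- so the factor $4$ coming from $\bre\pr_\gC(\dB)\ge 1/2$ hits only tail probabilities, where it is harmless --- and then bounds $\E e^{cF}$ by a layer-cake decomposition over the event $\{F\le L\gs^2 n\}$ and the slices $\{\ell\gs^2 n<F\le(\ell+1)\gs^2 n\}$, choosing first $L(\eps)$ and then $c(\eps)$. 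You instead use the exact identity $\chi^2=\bre\pr_\gC(\dB)^{-2}\,\E_{\bre\pr}[(\Lambda-1)\mathbf 1_{\dB}\mathbf 1_{\dB'}]$, evaluate the unconditional term by factorizing the MGF over independent edges (via $e^x\le 1+x+x^2$, giving $\exp(c^2\gs^4n^2/2)\le\exp(c(\eps)^2/2)$ since $cn\le c(\eps)$), and absorb the conditioning correction by Cauchy--Schwarz against $\E_{\bre\pr}(\Lambda-1)^2\le e^{2c(\eps)^2}-1$ (using $\E_{\bre\pr}\Lambda\ge 1$, which holds because all even moments of $\gC_{ij}\gC'_{ij}$ are nonnegative squares); dividing the resulting small quantity by $\bre\pr_\gC(\dB)^2\ge 1/4$ is then harmless. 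Both arguments are valid; you correctly diagnose why the naive density bound $d\pr_\gC/d\bre\pr_\gC\le 2$ fails, and your route is more elementary (no Bernstein, no slicing) with explicit constants, while the paper's tail-based route is the one that would survive if the entries of $\gC$ were only sub-exponential or weakly dependent, since it needs a concentration inequality for $F$ rather than a product MGF. One caveat is common to both proofs and is not a gap on your side: the centering $\E_{\bre\pr}\gC_{ij}=0$, which you flag explicitly, is genuinely necessary (otherwise $F$ concentrates near $m^2n^2/2$ and the claimed bound is false, e.g.\ for deterministic $\gC\equiv 1$); the paper uses this silently in its Bernstein step even though the lemma statement asserts only finite second moment.
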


\begin{proof}
	Let  $\tilde\gC$ be an independent copy of $\gC$, $\pr_{\gC\otimes\tilde\gC}$ denote their joint distribution, and $\gs^2:=\E_\gC(\gC_{ij}^2)$. 
	Noting  that $(A^{(t)}_{ij}, 1\le i<j\le n, t\in[\gk])$ are \iid $Ber(\rho)$ under $\pr_0^\gk$, and using the inequality $1+x\le e^x$ for any $x\in\R$, 
	\begin{align}
	1+\chi^2\left(\pr_0^\gk, \pr_{1,\ga}^\gk\right)
	=\E_0^\gk\left[\left(\frac{d\pr_{1,\ga}^\gk}{d\pr_0^\gk}\right)^2\right] 
	=\E_0^\gk\left[\E_{\gC\otimes\tilde\gC}\left(\frac{d\pr_{\gC,\ga}^\gk}{d\pr_0^\gk}\frac{d\pr_{\tilde\gC,\ga}^\gk}{d\pr_0^\gk}\right)\right]  \label{chi^2bd} \\ 
	=\E_{\gC\otimes\tilde\gC} \E_0^\gk\prod_{\substack{1\le i< j\le n\\ t\in[\gk]}}
	\left((1+\ga\gC_{ij})(1+\ga\tilde\gC_{ij}) \mathbf 1_{\{A^{(t)}_{ij}=1\}} + \left(1-\frac{\ga\rho\gC_{ij}}{1-\rho}\right) 
	\left(1-\frac{\ga\rho\tilde\gC_{ij}}{1-\rho}\right) \right. \notag \\
	\left.\mathbf 1_{\{A^{(t)}_{ij}=0\}}\right) = \E_{\gC\otimes\tilde\gC}\prod_{\substack{1\le i<j\le n\\t\in[\gk]}} \left(1+\frac{\ga^2\rho\gC_{ij}\tilde\gC_{ij}}{1-\rho}\right)
	\le  \E_{\gC\otimes\tilde\gC}\exp\left[\frac{\gk\ga^2\rho F(\gC,\tilde\gC)}{1-\rho}\right], \notag
	\end{align}
	where $F(\gC,\tilde\gC):=\sum_{1\le i<j\le n} \gC_{ij}\tilde\gC_{ij}$.
	Using Bernstein inequality and the fact that $\bre\pr(\dB)\ge 1/2$,
	\beqax
	\pr_{\gC\otimes\tilde\gC}\left(F(\gC,\tilde\gC) >\ell\gs^2 n\right) \le 4\bre\pr_{\gC\otimes\tilde\gC}\left(F(\gC,\tilde\gC) >\ell\gs^2 n\right)  \\
	\le 4\exp\left(\frac{-\frac 12\ell^2n^2\gs^4}{\sum_{1\le i<j\le n}\E_\gC(\gC_{ij}^2)\E_{\tilde\gC}(\tilde\gC_{ij}^2)+\frac 23 \ell \gs^2n}\right) \le e^{-3\ell^2/7}.
	\eeqax
	For any $L\ge 1$, we can use the above estimate to have 
	\begin{align*}
	& \E_{\gC\otimes\tilde\gC}\exp\left[\frac{\gk\ga^2\rho}{1-\rho} F(\gC,\tilde\gC)\right]
	\le  \E_{\gC\otimes\tilde\gC}\left(\exp\left[\frac{\gk\ga^2\rho}{1-\rho} F(\gC,\tilde\gC)\right]
	\mathbf 1_{\{F\le L\gs^2n\}}\right) \\
	&+ \sum_{\ell=L}^\infty \E_{\gC\otimes\tilde\gC}\left(\exp\left[\frac{\gk\ga^2\rho}{1-\rho} F(\gC,\tilde\gC)\right]
	\mathbf 1_{\{\ell\gs^2n<F\le (\ell+1)\gs^2n\}}\right)
	\le \exp\left[\frac{\gk\ga^2\rho}{1-\rho} L\gs^2n\right] \\
	&+ \sum_{\ell=L}^\infty \exp\left[\frac{\gk\ga^2\rho}{1-\rho} (\ell+1)\gs^2n\right] \pr_{\gc\otimes\tilde\gC}(F>\ell\gs^2n) \\
	&\le \exp\left[\frac{\gk\ga^2\rho}{1-\rho} L\gs^2n\right] 
	+ \sum_{\ell=L}^\infty 4\exp\left[\frac{\gk\ga^2\rho}{1-\rho} (\ell+1)\gs^2n-\frac 37\ell^2\right].
	\end{align*}
	Given $\eps>0$, we can choose $L(\eps)$ large enough so that $4\sum_{\ell\ge L} e^{(\ell+1)-3\ell^2/7} \le \eps/2$.
	Having chosen $L(\eps)$, we can choose $c(\eps)>0$ small enough such that $e^{c(\eps)L(\eps)\gs^2}\le 1+\eps/2$. Combining this with \eqref{chi^2bd} proves the result.
\end{proof}

\noindent
For $\gTh_0, \gTh_1\in\dS_n$ and $\tau\in[T]$, let  $\dS_n^{T,\gt}:=\{\bsQ\in\dS_n^T: \cK(\bsQ)=1, \cT_1(\bsQ)=\gt\}$,
\[ \bsQ(\gt,T;\gTh_0,\gTh_1)=\left(\overbrace{\gTh_0,\ldots, \gTh_0}^{\gt \text{ many}},\overbrace{\gTh_1,\ldots, \gTh_1}^{T-\gt \text{ many}}\right)\in\dS_n^{T,\gt}. \]
For any (possibly degenerate) probability distributions $\gN_0, \gN_1$ on $\dS_n$, let 
\beqx
\pr_{\bsQ(\gt,T;\gN_0,\gN_1)}(\cdot) := \E_{(\gTh_0,\gTh_1)\sim\gN_0\otimes\gN_1} \pr_{\bsQ(\gt,T;\gTh_0,\gTh_1)}(\cdot), \text{ and }\bsQ(\gt,T;\gN_0,\gN_1) 
\eeqx
be the distribution on $\dS_n^{T,\gt}$ satisfying
$\pr(\bsQ(\gt,T;\gN_0,\gN_1)=\bsQ(\gt,T;\gTh_0,\gTh_1)) = \gN_0\otimes\gN_1(\gTh_0,\gTh_1), $. For $r\in[n], \gk\in[T]$, and $\gc>0$, let
\beqax \sH(\gc, r, \gk) := \left\{\pr_{\bsQ(\gt,T;\gN_0,\gN_1)}:  (a) \;\gN_0 \text{ and  }  \gN_1 \text{ are distributions on } \dS_n,  \right. \\ 
(b) \;(\Theta_0,\Theta_1)\sim\gN_0\otimes\gN_1
\text{implies } \norm{\gTh_1-\gTh_0}_F^2 \le \frac{\gc}{\gk} n(\norm{\Theta_0}_{max} \vee \norm{\Theta_1}_{max}),  \\
\left.  \text{and rank}(\gTh_0-\gTh_1)=r, \text{ and }(c) \;\gk\le\gt\le T-\gk\right\}
\eeqax
\begin{lem}[Lower bound for localizability]\label{lower_local}
	For any $\eps>0$, there is a constant $\gc(\eps)>0$ such that for any $r\in[n]$ and $\gk< T/2$,
	$ \inf_{\hat\tau}\sup_{\pr_{\bsQ}\in\sH(\gc, r, \gk)}\E_{\bsQ}|\hat\tau-\cT_1(\bsQ)|\ge (T-2\gk)(1-2\eps)$.
\end{lem}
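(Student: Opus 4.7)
The plan is to apply Le Cam's two-point method with the extremal pair of change points $\gt_1 = \gk$ and $\gt_2 = T-\gk$. Fix $\rho \in (0,1)$ and set $\Theta_0 = \rho\overline{\mathbf 1_n \mathbf 1_n^T}$, $\gN_0 = \gd_{\Theta_0}$. For $\gN_1$ take the law of $\Theta(C,\ga) := \Theta_0 + \ga\rho \overline{C}$, where $C \in \dS_n$ is a symmetric random matrix with iid centered bounded entries (e.g.\ Rademacher on its upper triangle), conditioned on a high-probability event $\dB$ forcing $\mathrm{rank}(C) = r$---the same conditioning device already built into Lemma \ref{ChiSqBd}. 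Choosing $\ga^2 \asymp \gc/(n\rho\gk)$ ensures $\|\Theta_1 - \Theta_0\|_F^2 \le \gc n\rho/\gk$ and $\mathrm{rank}(\Theta_1 - \Theta_0) = r$, so both distributions $\pr_{\bsQ_i} := \pr_{\bsQ(\gt_i,T;\gN_0,\gN_1)}$ lie in $\sH(\gc,r,\gk)$.

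\textbf{Core estimate.} I would show that $TV(\pr_{\bsQ_1},\pr_{\bsQ_2}) \le 2\eps$ for $\gc = \gc(\eps)$ sufficiently small. Triangle inequality through the homogeneous reference $\pr_{\Theta_0}^T$ reduces this to controlling
\[ TV(\pr_{\bsQ_1},\pr_{\Theta_0}^T) + TV(\pr_{\bsQ_2},\pr_{\Theta_0}^T) = TV(\pr_{1,\ga}^{T-\gk},\pr_0^{T-\gk}) + TV(\pr_{1,\ga}^{\gk},\pr_0^{\gk}), \]
each of which is at most $\eps$ by Lemma \ref{ChiSqBd} combined with $TV \le \tfrac{1}{2}\sqrt{\chi^2}$. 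The standard Le Cam two-point inequality then gives
\[ \inf_{\hat\tau}\sup_{\pr_\bsQ\in\sH(\gc,r,\gk)}\E_\bsQ|\hat\tau - \cT_1(\bsQ)| \ge \tfrac{|\gt_1-\gt_2|}{2}\bigl(1 - TV(\pr_{\bsQ_1},\pr_{\bsQ_2})\bigr) \ge \tfrac{T-2\gk}{2}(1 - 2\eps), \]
which recovers the stated bound (up to an overall factor of $1/2$: the trivial estimator $\hat\tau \equiv T/2$ already achieves sup risk at most $(T-2\gk)/2$, so the multiplicative prefactor on $T-2\gk$ in the statement must be read modulo this constant).

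\textbf{Main obstacle.} The principal technical difficulty is to keep $\gc(\eps)$ an absolute constant, independent of the aspect ratio $\gk/T$. The triangle step applied to $TV(\pr_{1,\ga}^{T-\gk},\pr_0^{T-\gk})$ via Lemma \ref{ChiSqBd} requires $(T-\gk)\ga^2\rho n \le c(\eps)(1-\rho)$, which combined with $\ga^2 \asymp \gc/(n\rho\gk)$ forces $\gc \le c(\eps)(1-\rho)\gk/(T-\gk)$ and hence degrades when $\gk \ll T$. To remove this dependence one must compute $\chi^2(\pr_{\bsQ_1},\pr_{\bsQ_2})$ directly: writing both measures as mixtures over the common $\gN_1$, the trailing $\gk$ observations have identical mixture marginals under both $\bsQ_i$, so after integrating out the $\gN_1$-posterior given those trailing positions the chi-square collapses to an exponential moment of $F(C,\tilde C) = \sum_{i<j} C_{ij}\tilde C_{ij}$ over the $T-2\gk$ middle positions, weighted by the posterior. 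The same Bernstein-plus-truncation argument used in the proof of Lemma \ref{ChiSqBd} then bounds this exponential moment by a constant depending only on $\eps$, yielding $\gc = \gc(\eps)$ as required. The rank-$r$ constraint is accommodated throughout by the conditioning device $\dB$ already permitted in Lemma \ref{ChiSqBd}.
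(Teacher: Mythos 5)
There is a genuine gap, and it lies exactly where you located the ``main obstacle'': your two-point pair is the wrong one, and the direct chi-square computation you propose as a fix cannot rescue it. You take both hypotheses with the same orientation, $\bsQ(\gk,T;\gN_0,\gN_1)$ versus $\bsQ(T-\gk,T;\gN_0,\gN_1)$, so the middle block of $T-2\gk$ layers is perturbed (by the common draw of $\gC$) under the first hypothesis and equal to $\gTh_0$ under the second. With the full signal budget $\ga^2\asymp\gc/(n\rho\gk)$ allowed by $\sH(\gc,r,\gk)$, the relevant exponential moment after integrating out $\gC,\tilde\gC$ is $\E\exp[(T-2\gk)\ga^2\rho F(\gC,\tilde\gC)/(1-\rho)]$, whose exponent scales like $\gc\,(T-2\gk)/\gk$ times the typical size of $F/n$; the Bernstein-plus-truncation argument of Lemma \ref{ChiSqBd} only controls this when that ratio is $O(1)$, which again forces $\gc\lesssim\gk/(T-2\gk)$. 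This is not an artifact of a lossy bound: for fixed $\gc$ and $\gk\ll T$ your two hypotheses are genuinely consistently distinguishable (average the $T-2\gk$ middle layers and test for the perturbation), so no divergence computation can keep their total variation away from $1$. Your remark that the trailing $\gk$ layers have identical mixture marginals does not help, because the discrepancy lives in the middle block, not the trailing one.

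The paper's construction avoids this by reversing the orientation of one hypothesis: it compares $\bsQ^{[0]}=\bsQ(\gk,T;\gN_1,\gN_0)$ (perturbation on the \emph{first} $\gk$ layers) with $\bsQ^{[1]}=\bsQ(T-\gk,T;\gN_0,\gN_1)$ (perturbation on the \emph{last} $\gk$ layers). The middle $T-2\gk$ layers are then distributed as $\pr_{\gTh_0}$ under both hypotheses, the change points still differ by $T-2\gk$, and subadditivity of total variation over the product structure gives $d_{TV}(\pr_{\bsQ^{[0]}},\pr_{\bsQ^{[1]}})\le 2\,d_{TV}(\pr_0^{\gk},\pr_{1,\ga}^{\gk})$, which Lemma \ref{ChiSqBd} controls under the condition $\gk\ga^2\rho n\le c(1-\rho)$ — consistent with an absolute $\gc(\eps)$. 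Everything else in your plan (Le Cam's two-point inequality, the choice of $\gTh_0$ and the scaling of $\ga$, the factor-of-two bookkeeping) matches the paper. One secondary point: a symmetric matrix with \iid Rademacher upper triangle has rank $r$ with vanishing probability for $r<n$, so conditioning it on $\{\mathrm{rank}(\gC)=r\}$ violates the requirement $\bre\pr_\gC(\dB)\ge 1/2$ of Lemma \ref{ChiSqBd}; the paper instead takes $\gC=\sum_{i=1}^{r}3^{-i}\vU_i\vU_i^{T}$ with \iid Rademacher vectors $\vU_i$, which has rank exactly $r$ with high probability by construction.
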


\begin{proof}
	Given $\eps>0$, take $\gc$ to be equal to $c(\eps)$ (the constant in Lemma \ref{ChiSqBd}).
	Let (a) $\ga,\rho\in(0,1)$ be such that $\gk\ga^2\rho n\le \gc(1-\rho)$, (b) $\gTh_0=\ol{\rho\mathbf 1_n\mathbf 1_n^T}$, (c) $\gC=\sum_{i=1}^r  3^{-i}\vU_i\vU_i^T$, where $\vU_1, \ldots, \vU_r$ are \iid $n\times 1$ vectors with $\pr_\gC(U_{ij}=\pm 1)=1/2$, (d) $\gN_1$ be the distribution of $\gTh_{\gC,\ga}:=\gTh_0+\ol{\ga\rho\gC}$ given the event $\dB:=\{\text{rank}(\gC)=r\}$, (e) $\bsQ^{[0]}:=\bsQ(\gk,T;\gN_1,\gN_0)$, (f) $\bsQ^{[1]}:=\bsQ(T-\gk,T;\gN_0,\gN_1)$, and (g) $\pr_0^\gk, \pr_{1,\ga}^\gk$ are as in \eqref{P kappa def}.
	It is well known that the probability of $\dB$ goes to 1 as $n\to\infty$ \cite{}.
	It is easy to see that if $\gTh_1\sim\gN_1$, then
	$\norm{\gTh_1-\gTh_0}_F^2 \le \ga^2\rho^2n^2 \le \gc \rho n/\gk$. Thus $\pr_{\bsQ*{[i]}}\in\sH(\gc,r,\gk)$ for both $i=1,2$, as $||\gTh_0||_{max}=\rho$. Also, $|\cT_1(\bsQ^{[0]}) - \cT_1(\bsQ^{[1]})| = T-2\gk$. So, using Le Cann's lemma \cite{},
	\[ \inf_{\hat\tau}\sup_{\pr_{\bsQ}\in\sH(\gc, r,\gk)} \E_{\bsQ}|\hat\tau - \cT_1(\bsQ)| \ge (T-2\gk) (1-d_{TV}(\pr_{\bsQ^{[0]}}, \pr_{\bsQ^{[1]}}),\]
	where $d_TV(\cdot,\cdot)$ denotes the total variation distance. 
	Using the facts that $d_TV(\cdot,\cdot)\le \chi^2(\cdot,\cdot)$ and $d_TV(\gM_1\otimes\gM_2,\gN_1\otimes\gN_2) \le d_{TV}(\gM_1,\gN_1) + d_{TV}(\gM_2,\gN_2)$ for all compatible probability distributions $\gM_i,\gN_i$, we see that
	\[ d_{TV}(\pr_{\bsQ^{[0]}}, \pr_{\bsQ^{[1]}}) 
	\le 2d_{TV}(\pr_{0}^\gk, \pr_{1,\ga}^\gk) \le \chi^2(\pr_{0}^\gk, \pr_{1,\ga}^\gk),\]
	which is at most $2\eps$ by Lemma \ref{ChiSqBd} and the choice of $\ga$.
\end{proof}

\begin{lem}[Lower bound for detectability]\label{lower_detect}
	For any $\eps>0$, there is a constant $c(\eps)>0$ such that if $\dY:=\{\bsQ\in\dS_n^T:\cK(\bsQ)\ge 1, \fS(\bsQ)\le c(\eps)\sqrt{\frac{\cD(\bsQ)}{\fg(\bsQ)}}\}$, then $\Pi(\gO_{n,T};\dS_{n,0}^T,\dY) \ge 1-\sqrt\eps/2$ for all test function $\gO_{n,T}$.
\end{lem}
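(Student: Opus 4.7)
The plan is a standard Le Cam two-point argument, reducing the bound to the chi-square estimate of Lemma \ref{ChiSqBd}. We will construct one null parameter $\bsQ_0\in\dS_{n,0}^T$ and one mixture distribution $\bar\pr$ that is a convex combination of observation laws $\pr_{\bsQ}$ with $\bsQ\in\dY$, satisfying $d_{TV}(\pr_{\bsQ_0},\bar\pr)\le\sqrt\eps/2$. For any test $\gO_{n,T}$, this yields
\[
\Pi(\gO_{n,T};\dS_{n,0}^T,\dY)\ge \pr_{\bsQ_0}(\gO_{n,T}=1)+\bar\pr(\gO_{n,T}=0)\ge 1-d_{TV}(\pr_{\bsQ_0},\bar\pr)\ge 1-\tfrac{\sqrt\eps}{2},
\]
since $\bar\pr$ lies in the convex hull of $\{\pr_{\bsQ}:\bsQ\in\dY\}$ (so its Type II error is dominated by the supremum over $\dY$) and $d_{TV}\le\tfrac12\sqrt{\chi^2}$.

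The construction mirrors that in the proof of Lemma \ref{lower_local}. Fix $\rho\in(0,1/2]$ and $\gk\in\{1,\ldots,\lfloor T/2\rfloor\}$, set $\gt:=T-\gk$, and let $\gTh_0:=\rho\ol{\mathbf 1_n\mathbf 1_n^T}$. Let $\gC$ be the symmetric zero-diagonal matrix whose upper-triangular entries are iid Rademacher, conditioned on $\dB:=\{\norm{\gC}\le 3\sqrt n\}$; standard Wigner-type concentration gives $\bre\pr(\dB)\ge 1/2$ for $n$ sufficiently large, so Lemma \ref{ChiSqBd} applies with $\pr_\gC:=\bre\pr(\cdot\mid\dB)$. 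For $\alpha\in(0,1)$ to be calibrated below, let $\gN_0:=\gd_{\gTh_0}$, let $\gN_1$ be the law of $\gTh(\gC,\alpha):=\gTh_0+\alpha\rho\gC$ under $\pr_\gC$ (note $\ol{\gC}=\gC$ since $\gC$ already has zero diagonal), put $\bsQ_0:=(\gTh_0,\ldots,\gTh_0)$, and $\bar\pr:=\pr_{\bsQ(\gt,T;\gN_0,\gN_1)}$.

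Two items need verification. First, each realization of $\bsQ\sim\bsQ(\gt,T;\gN_0,\gN_1)$ lies in $\dY$: by construction $\cK(\bsQ)=1$, $\fg(\bsQ)=\min(\gt,T-\gt)=\gk$, $\cD(\bsQ)=\rho(1+\alpha)\in[\rho,2\rho]$, and on $\dB$ the jump size is $\fS(\bsQ)=\alpha\rho\norm{\gC}\le 3\alpha\rho\sqrt n$; choosing $\alpha:=c(\eps)/(3\sqrt{n\rho\gk})$ yields $\fS(\bsQ)\le c(\eps)\sqrt{\rho/\gk}\le c(\eps)\sqrt{\cD(\bsQ)/\fg(\bsQ)}$, placing $\bsQ$ in $\dY$. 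Second, since the null and alternative agree on layers $1,\ldots,\gt$, tensorization of $\chi^2$ gives
\[
\chi^2(\pr_{\bsQ_0},\bar\pr)=\chi^2(\pr_0^\gk,\pr_{1,\alpha}^\gk),
\]
with $\pr_0^\gk,\pr_{1,\alpha}^\gk$ from \eqref{P kappa def} applied to the $\gk=T-\gt$ coinciding final layers. Our calibration gives $\gk\alpha^2\rho n=c(\eps)^2/9$, which for sufficiently small $c(\eps)$ is at most $\tilde c(\eps)(1-\rho)$ (here $\tilde c(\eps)$ denotes the constant from Lemma \ref{ChiSqBd} and we use $\rho\le 1/2$); that lemma then yields $\chi^2\le\eps$.

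The delicate point is the simultaneous calibration of $\alpha$: the same value must keep the jump $\alpha\rho\norm{\gC}$ below the detection threshold $c(\eps)\sqrt{\cD/\fg}$ while driving the $\gk$-layer chi-square below $\eps$. The scaling $\alpha\asymp 1/\sqrt{n\rho\gk}$ is precisely the point where both constraints become simultaneously $\Theta(1)$, and a single sufficiently small absolute constant $c(\eps)$ handles both. The remaining steps --- using a zero-diagonal $\gC$ to avoid an explicit $\ol{\cdot}$ correction, verifying $\bre\pr(\dB)\ge 1/2$ via Wigner concentration, and tensorizing $\chi^2$ across the coinciding initial segment of layers --- are routine.
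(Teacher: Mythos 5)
Your proposal is correct and follows essentially the same route as the paper: a Le Cam two-point argument with a mixture over the perturbation, reduced via $d_{TV}\le\frac12\sqrt{\chi^2}$ to the chi-square bound of Lemma \ref{ChiSqBd}. The only substantive difference is cosmetic but in your favor --- you perturb by a full i.i.d.\ Rademacher matrix conditioned on $\norm{\gC}\le 3\sqrt n$, which matches the i.i.d.-entries hypothesis of Lemma \ref{ChiSqBd} directly (the paper's rank-one $\vU\vU^T$ perturbation does not, strictly speaking), and you spell out the calibration $\alpha\asymp 1/\sqrt{n\rho\gk}$ and the membership check $\fS(\bsQ)\le c(\eps)\sqrt{\cD(\bsQ)/\fg(\bsQ)}$ that the paper leaves implicit.
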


\begin{proof}
	Fix any $\rho\in(0,1)$. Given $\eps>0$ let $c(\eps)$ be the constant of Lemma \ref{ChiSqBd}. Choose $\ga$ so that $\gk\ga^2\rho n\le c(1-\rho)$.
	Let  $\gTh_0=\ol{\rho\mathbf 1_n\mathbf 1_n^T}, \gTh(\vU,\ga)=\gTh_0+\ga\ol{\vU\vU^T},$ where $\vU\in\R^{n\times 1}$ has \iid components with $\pr(U_i=\pm 1)=1/2$,
	\begin{align*}
	\bsQ_0:=\left(\gTh_0, \ldots, \gTh_0\right), \bsQ(\vU,\ga)=\left(\overbrace{\gTh(\vU,\ga), \ldots, \gTh(\vU,\ga)}^{\gk \text{ times}},  \gTh_0, \ldots, \gTh_0\right)
	\in\dS_n^T
	\end{align*}
	Let $\pr_0$ and $\pr_1$ represent the distributions $\pr_{\bsQ_0}$ and $\E_\vU\pr_{\bsQ(\vU,\ga)}$ respectively. Let $\gO_{n,T}^*:=\mathbf 1_{\{\frac{d\pr_1}{d\pr_0}>1\}}$
	Then, for any test function $\gO_{n,T}$, 
	\begin{align*}
	\Pi\left(\gO_{n,T}; \dS_{n,0}^T,\dY\right)
	\ge \pr_0(\gO_{n,T}=1)+\pr_1(\gO_{n,T}=0) \\
	= 1+\int_{\{\gO_{n,T}=1\}}\left[1-\frac{d\pr_1}{d\pr_0}\right]d\pr_0
	\ge 1+\int_{\{\gO_{n,T}^*=1\}}\left[1-\frac{d\pr_1}{d\pr_0}\right]d\pr_0 \\
	=1-\frac 12 \E_0\left|\frac{d\pr_1}{d\pr_0}-1\right|
	\ge 1-\frac 12 \sqrt{\E_0\left[\left(\frac{d\pr_1}{d\pr_0}\right)^2\right]-1} \ge 1-\sqrt\eps/2
	\end{align*}
	by the choice of $\ga$ and the result of  Lemma \ref{ChiSqBd}.
\end{proof}

\subsection{Multiple Change Points}
\label{sec:multiple}
Algorithm 1 and 2 were presented in \textsection \ref{sec_multiple} to detect and localize multiple change points in network sequences. Recall that the data is given in form of a sequence of networks represented by the symmetric adjacency matrices $\left(\vA^{(1)}_{n\times n}, \ldots, \vA^{(T)}_{n\times n}\right)$. The generating model (presented in \textsection \ref{sec_multiple}) with change points $(\tau_1, \tau_2, \ldots, \tau_K)$ and parameters $\bsQ = \left(\vQ^{(1)}, \ldots, \vQ^{(T)}\right)$, can be recalled as -
\begin{enumerate}[(i)]
	\item if $\vQ^{(t)} = \vQ$, for all $t \in [T]$, there exists no change-point;
	\item if there exists a set of change-points $(\tau_1, \tau_2, \ldots, \tau_K)$ with $\tau_0 := 0  \le \tau_1 \le \cdots \le \tau_K \le T =: \tau_{K+1}$ such that 
	\[\vQ^{(t)} = \vQ_k \ \  \text{for } \tau_{k-1} \le t \le \tau_k,\ k\in [K]\]
	where, $\vQ_k \in \dS_n$ for $k\in [K]$, then there exists a multiple change-points $(\tau_1, \tau_2, \ldots, \tau_K)$. 
	\item Additionally $\fg(\bsQ)$ represents the \emph{minimum cushion} as defined in \eqref{cushion_def}, $\fg(\bsQ) := \min_{0\le k\le K} (\tau_{k+1}-\tau_k)$.	
	\item \emph{Signal strength} is given by 
	\begin{align}\label{eq_signal}
	\fS(\bsQ) := \begin{cases}
		\min_{k\in[\cK(\bsQ)]} \norm{\vQ^{(\tau_k+1)}-\vQ^{(\tau_k)}} & \text{ if } \cK(\bsQ)\ge 1\\
		0 & \text{ if } \cK(\bsQ)=0\end{cases}
	\end{align}
	\item \emph{Sparsity} parameter is given by 
	\begin{align}\label{eq_sparsity}
	\cD(\bsQ):=\max_{i,j\in[n], t\in[T]} Q^{(t)}_{ij}.
	\end{align}
\end{enumerate}

\subsubsection{Window-based Method}
\label{sec_window_theory}
Algorithm 1, which is presented in \textsection \ref{sec_multiple_window}, outputs $(\hat \tau, \gL, \cI)$, which are the estimates of the change point $\tau$ along with interval length $\gL$, and interval $\cI$ containing the estimated change point. The theoretical results on the performance of the change point estimate $\hat \tau$ is given in Theorems \ref{multiple tau detect} and \ref{multiple tau local}.  
We show two different types of theoretical results for the change point detection problem:
\begin{enumerate}
	\item \emph{Detectability:} The results on detectability focuses on correctly detecting the presence or absence of change point in a network sequence. The loss function for detection is given in terms of minimax loss as defined in \eqref{minimax_risk}. The theoretical result on detectability for window-based algorithm (Algorithm 1) is given in Theorem \ref{multiple tau detect}.
	\item \emph{$\gL$-localizability:} The results on $\gL$-localizability focuses on correctly estimating locations of all the change points in a network sequence and giving an interval estimate of length $\gL$ around the true change points. The loss function for $\gL$-localizability is given in form of the error made in estimating the location of the change point. The theoretical result on $\gL$-localizability for window-based algorithm (Algorithm 1) is given in Theorem \ref{multiple tau local} and for wild binary segmentation algorithm (Algorithm 2) is given in Theorem \ref{thm_wbs}.
\end{enumerate}
\begin{thm}[Detectability Result] \label{multiple tau detect}
	Let us consider that we have a sequence of networks represented by the symmetric adjacency matrices $\bsA = \left(\vA^{(1)}_{n\times n}, \ldots, \vA^{(T)}_{n\times n}\right)$ generated from the MIRGraM model with parameters $\bsQ = (\vQ^{(1)}, \ldots, \vQ^{(T)}) \in \dS_n^T$ with the set of change points given by $\cT(\bsQ) := (\tau_1, \tau_2, \ldots, \tau_K)$ if $\cK(\bsQ) > 0$. Then, we have the following results on detecting change points in the sequence $\bsA$.
	\begin{enumerate}
		\item \emph{(Lower bound)} For any $\eps>0$, there is a constant $c(\eps)>0$ such that if $\dY:=\left\{\bsQ\in\dS_n^T:\cK(\bsQ)\ge 1, \fS(\bsQ)\le c(\eps)\sqrt{\frac{\cD(\bsQ)}{\fg(\bsQ)}}\right\}$, then $\Pi(\gO_{n,T};\dS_{n,0}^T,\dY) \ge 1-\sqrt\eps/2$ for all test function $\gO_{n,T}$.
		\item \emph{(Upper bound)} Consider that Algorithm 1 is applied on $\bsA$ with cushions of length $\gk$ ($3\le \gk \le T$), and intervals of length $\gL$ ($3 \le \gL\le \gk$). There are constants $C_1, c_1, \zeta_0>0$ with $\Psi_\mu$ as in \eqref{Psi def_gen} for $\mu>0$, $d$ and $\fd$ as in \eqref{d def} such that if $\dW$ be the set $\bsQ \in \dS_n^T$ satisfying the properties $\cK(\bsQ) \ge 1$, and
		\begin{align}
		\label{eq_thm1_cond}
		\gk > \fg(\bsQ) \text{ and } \fS(\bsQ) \geq (C_1+c_1\Psi_\mu)\left[\frac{d}{\gL\wedge\gk}\left(\zeta+\frac{\log(|\cL|)}{\log(n)}\right)\right]^{1/2}
		\end{align}  
		then,
		\begin{align*}
		\Pi(\gO_{n,T};\dS_{n,0}^T,\dW) \le 2\frac{(|\cL|)^{9/\log(n)}}{\log(n)} n^{-\zeta}+3\exp\left[-\mu(\gL\wedge\gk)\fd\right] \text{ for all } \zeta>\zeta_0
		\end{align*}
		
	\end{enumerate}	
\end{thm}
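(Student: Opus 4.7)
The first statement is precisely Lemma \ref{lower_detect}, so no additional work is required for it. The real task is the upper bound, which I would establish by controlling the Type~I and Type~II components of $\Pi(\gO_{n,T};\dS_{n,0}^T,\dW)$ on a single high-probability event. I would begin with the decomposition $\vG^{(t)} = \E[\vG^{(t)}] + \bkt{\vG^{(t)}}$. Under $\dS_{n,0}^T$ the mean vanishes identically, so the false-alarm probability is controlled as soon as $\|\bkt{\vG^{(t)}}\|$ is uniformly below the data-driven threshold over all windows $\cI_\ell \in \cL$ and all interior indices $t$. Under $\dW$ I must exhibit, for any $\bsQ$ in the set, at least one window and one interior index $t^\star$ at which $\|\vG^{(t^\star)}\|$ exceeds that threshold.

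The key observation for Type~II is that the windows in \eqref{sJ_ell def} slide with step $\lfloor\gL/3\rfloor$, so every true change point $\tau_k$ falls in the middle third of at least one $\cI_\ell$; setting $t^\star = \tau_k$ in that window makes $\E[\vG^{(\tau_k)}]$ a rescaled copy of $\vQ^{(\tau_k+1)}-\vQ^{(\tau_k)}$ whose spectral norm is at least a universal constant times $\sqrt{\gL\wedge\gk}\,\fS(\bsQ)$. A mild nuisance is that additional change points within the same window contribute extra mean terms, but the cushion construction and the hypothesis tying $\gk$ to $\fg(\bsQ)$ ensure that some window isolates each $\tau_k$ well enough that the nuisance terms cannot cancel the leading signal.

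The core technical estimate is the uniform spectral concentration
\[
\max_{\cI_\ell \in \cL,\, t}\bigl\|\bkt{\vG^{(t)}}\bigr\| \;\lesssim\; \Psi_\mu \sqrt{\frac{d}{\gL\wedge\gk}\Bigl(\zeta+\frac{\log|\cL|}{\log n}\Bigr)},
\]
with failure probability matching the two terms appearing in the theorem. I would derive this in three moves: (i) a Chernoff bound for $\bar D$ around $\fd$ justifies the choice of $\gC$ in \eqref{gC def_gen} and guarantees, with probability at least $1-\exp[-\mu(\gL\wedge\gk)\fd]$, that the degree-trimming step removes every vertex whose degree is large enough to spoil operator-norm concentration; (ii) on that event, the centered trimmed matrices $\tilde\vA^{(s)}-\E\tilde\vA^{(s)}$ are bounded symmetric matrices with variance proxy controlled by $\fd$, so a matrix Bernstein/Freedman inequality applied to the partial sums inside $\vG^{(t)}$ delivers the displayed bound through the Bennett-type function $\gf$ used in \eqref{Psi def_gen}; (iii) a union bound over the $O(\gL)$ interior indices within a window and the $|\cL|$ windows produces the prefactor $|\cL|^{9/\log n}/\log n$.

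Combining the two ingredients finishes the proof. On the good event, the signal-to-noise condition \eqref{eq_thm1_cond}, with constants $C_1, c_1$ chosen to absorb the universal constants from step (ii), forces $\|\vG^{(t^\star)}\|$ above the threshold $\Theta_\mu\bigl[\bar D/\gL\bigl(\zeta+6+\log|\cL|/\log n\bigr)\bigr]^{1/2}$ under $\dW$, while under $\dS_{n,0}^T$ the noise alone never reaches it; the failure-probability bound is then the union of the noise-concentration and trimming-event complements. The main obstacle I expect is step (ii): obtaining a Bernstein-type estimate on trimmed partial sums that is sharp enough to match both regimes of $\Psi_\mu$ in \eqref{Psi def_gen} (the $|\cL|\in(e,e^n)$ and complementary cases), and that survives the double union bound over $t$ and $\cL$ without losing the correct $\zeta$-dependence in the exponent.
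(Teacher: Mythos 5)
Your treatment of the lower bound matches the paper exactly: it is a direct invocation of Lemma \ref{lower_detect}. For the upper bound, however, you take a genuinely different route from the paper. The paper does not prove detectability directly at all: it deduces the Type I/Type II bound as an immediate consequence of the $\gL$-localizability result (Theorem \ref{multiple tau local}), whose proof is deferred to the Appendix; once the algorithm is shown to recover $\hat K = K$ and localize every $\tau_k$ on a high-probability event, correct detection under both $\dS_{n,0}^T$ and $\dW$ follows with the identical error bound. You instead argue detection from first principles via the decomposition $\vG^{(t)} = \E[\vG^{(t)}] + \bkt{\vG^{(t)}}$, a uniform spectral concentration bound for the trimmed, centered CUSUM matrices, and the observation that the overlapping windows of \eqref{sJ_ell def} place each change point in the middle third of some window. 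The ingredients you list --- the Chernoff control of $\bar D$ justifying the trimming level $\gC$, the Bernstein/Bennett-type operator-norm concentration producing the two regimes of $\Psi_\mu$, and the union bound over windows and interior times yielding the $|\cL|^{9/\log n}n^{-\zeta}$ and $\exp[-\mu(\gL\wedge\gk)\fd]$ terms --- are precisely the machinery the Appendix proof of localization must contain, so the technical content overlaps heavily; the difference is organizational. Your route is arguably the more natural one if detection alone is the goal (and for Type II you only need \emph{one} window to fire, which makes your worry about nuisance terms from multiple change points in a single window less critical than it would be for localization); the paper's route gets detection for free once the stronger localization statement is in hand, at the cost of proving more than is needed. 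One small internal inconsistency to fix: you state that $\|\E[\vG^{(\tau_k)}]\|$ is of order $\sqrt{\gL\wedge\gk}\,\fS(\bsQ)$, but with the normalization in \eqref{Gt_def} (differences of \emph{averages} scaled by $\sqrt{\tfrac{t}{\gL}(1-\tfrac{t}{\gL})}$) the signal is of order $\fS(\bsQ)$ and the noise of order $\sqrt{d/(\gL\wedge\gk)}$; your stated concentration bound uses the latter convention, so the two sentences should be put on the same footing, though either convention leads to the same condition \eqref{eq_thm1_cond}.
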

\begin{proof}
	\begin{enumerate}
		\item The proof follows from Lemma \ref{lower_detect}.
		\item The proof follows from the result in Theorem \ref{multiple tau local}.
	\end{enumerate}
\end{proof}
\begin{rem}
	Note that in Algorithm 1, $\gL \leq \gk$, so $\gL\wedge \gk = \gL$ in Theorem \ref{multiple tau detect}. 
\end{rem}

\begin{thm}[$\gL$-localizability Result] \label{multiple tau local}
	Let us consider that we have a sequence of networks represented by the symmetric adjacency matrices $\bsA = \left(\vA^{(1)}_{n\times n}, \ldots, \vA^{(T)}_{n\times n}\right)$ generated from the MIRGraM model with parameters $\bsQ = (\vQ^{(1)}, \ldots, \vQ^{(T)}) \in \dS_n^T$ with the set of change points given by $\cT(\bsQ) := (\tau_1, \tau_2, \ldots, \tau_K)$ if $\cK(\bsQ) > 0$. Then, we have the following results on $\gL$-localizability of change point estimates in the sequence $\bsA$.
	\begin{enumerate}
		\item \emph{(Lower bound)} For the case of single change point $\tau\in[T]$, let $\dS_n^{T,\gt}:=\{\bsQ\in\dS_n^T: \cK(\bsQ)=1, \cT_1(\bsQ)=\gt\}$, and 
		\[ \bsQ(\gt,T;\gTh_0,\gTh_1)=\left(\overbrace{\gTh_0,\ldots, \gTh_0}^{\gt \text{ many}},\overbrace{\gTh_1,\ldots, \gTh_1}^{T-\gt \text{ many}}\right)\in\dS_n^{T,\gt}, \]
		where, $\gTh_0, \gTh_1\in\dS_n$. For any (possibly degenerate) probability distributions $\gN_0, \gN_1$ on $\dS_n$, let 
		\beqx
		\pr_{\bsQ(\gt,T;\gN_0,\gN_1)}(\cdot) := \E_{(\gTh_0,\gTh_1)\sim\gN_0\otimes\gN_1} \pr_{\bsQ(\gt,T;\gTh_0,\gTh_1)}(\cdot), \text{ and }\bsQ(\gt,T;\gN_0,\gN_1) 
		\eeqx
		be the distribution on $\dS_n^{T,\gt}$ satisfying $\pr(\bsQ(\gt,T;\gN_0,\gN_1)=\bsQ(\gt,T;\gTh_0,\gTh_1)) = \gN_0\otimes\gN_1(\gTh_0,\gTh_1)$. For $r\in[n], \gk\in[T]$, and $\gc>0$, let
		\beqax \sH(\gc, r, \gk) := \left\{\pr_{\bsQ(\gt,T;\gN_0,\gN_1)}:  (a) \;\gN_0 \text{ and  }  \gN_1 \text{ are distributions on } \dS_n,  \right. \\ 
		(b) \;(\Theta_0,\Theta_1)\sim\gN_0\otimes\gN_1
		\text{implies } \norm{\gTh_1-\gTh_0}_F^2 \le \frac{\gc}{\gk} n(\norm{\Theta_0}_{max} \vee \norm{\Theta_1}_{max}),  \\
		\left.  \text{and rank}(\gTh_0-\gTh_1)=r, \text{ and }(c) \;\gk\le\gt\le T-\gk\right\}
		\eeqax
		Then, for any $\eps>0$, there is a constant $\gc(\eps)>0$ such that for any $r\in[n]$ and $\gk< T/2$,
		\[ \inf_{\hat\tau}\sup_{\pr_{\bsQ}\in\sH(\gc, r, \gk)}\E_{\bsQ}|\hat\tau-\cT_1(\bsQ)|\ge (T-2\gk)(1-2\eps).\]
		\item \emph{(Upper bound)} Consider that Algorithm 1 is applied on $\bsA$ with cushions of length $\gk$ ($3\le \gk \le T$), and intervals of length $\gL$ ($3 \le \gL\le \gk$). There are constants $C_1, c_1, \zeta_0>0$ with $\Psi_\mu$ as in \eqref{Psi def_gen} for $\mu>0$, $d$ and $\fd$ as in \eqref{d def} such that if $\dW$ be the set $\bsQ \in \dS_n^T$ satisfying the properties $\cK(\bsQ) \ge 1$, and
		\begin{align}
		\label{eq_thm2_cond}
		\gk > \fg(\bsQ) \text{ and } \fS(\bsQ) \geq (C_1+c_1\Psi_\mu)\left[\frac{d}{\gL\wedge\gk}\left(\zeta+\frac{\log(|\cL|)}{\log(n)}\right)\right]^{1/2}
		\end{align}  
		then,
				\beqax
				\pr\left(|\hat{\tau_i} - \tau_i| \ge \gL \frac{C_1+c_1\Psi_\mu}{\fS(\bsQ)}\left[\frac{d}{\gL\wedge\gk}\left(\zeta+\frac{\log(|\cL|)}{\log(n)}\right)\right]^{1/2}\;\forall\; i\in [K] \text{ and } \hat K=K\right) \\
				\le  2\frac{(|\cL|)^{9/\log(n)}}{\log(n)} n^{-\zeta}+3\exp\left[-\mu(\gL\wedge\gk)\fd\right] \text{ for all } \zeta>\zeta_0. 
				\eeqax
	\end{enumerate}	
\end{thm}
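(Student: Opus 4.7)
The lower bound is Lemma \ref{lower_local}, which has already been proved, so I focus on the upper bound for Algorithm 1. The plan is to analyze the CUSUM statistic in \eqref{Gt_def} through the decomposition $\vG^{(t)} = \E\vG^{(t)} + \bkt{\vG^{(t)}}$, and to control the signal $\E\vG^{(t)}$ and the noise $\bkt{\vG^{(t)}}$ separately on each window $\cI_\ell\in\cL$.

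The technical heart of the argument is a uniform noise bound: I would show that with probability at least $1 - 2(|\cL|)^{9/\log n}/\log(n)\cdot n^{-\zeta} - 3\exp[-\mu(\gL\wedge\gk)\fd]$,
\[ \max_{\cI_\ell\in\cL}\max_{t\in\cI_\ell}\norm{\bkt{\vG^{(t)}}} \le (C_1+c_1\Psi_\mu)\left[\frac{d}{\gL\wedge\gk}\left(\zeta+\frac{\log|\cL|}{\log n}\right)\right]^{1/2}. \]
For this I would invoke Feige--Ofek / Le--Levina--Vershynin style operator-norm concentration for pruned sparse random matrices: the trimming threshold $\gC$ of \eqref{gC def_gen} is calibrated so that, after removing the $\gC$ highest-degree rows and columns, the centered partial sum $\sum_{s\in\cI}\tilde\vA^{(s)}-\sum_{s\in\cI}\vP^{(s)}$ has spectral norm of order $\sqrt{\gL d}$ multiplied by the logarithmic factor that yields the two regimes of $\Psi_\mu$ in \eqref{Psi def_gen}, depending on whether $|\cL|\in(e,e^n)$. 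A Chernoff bound on $\bar D$ converts the data-adaptive threshold used in Algorithm 1 into a deterministic population threshold, and a union bound over the $|\cL|\cdot\gL$ candidate split-points produces the displayed inequality.

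The signal side is a direct computation. Under the assumed relationship between $\gk$, $\gL$ and $\fg(\bsQ)$, each $\cI_\ell\in\cL$ contains at most one change point; on a window containing none, $\E\vG^{(t)}\equiv\mathbf 0$, so the threshold is not crossed and no false detection occurs there. When $\cI_\ell$ contains exactly one change point $\tau_k$, one checks that for $t\in\cI_\ell$,
\[ \norm{\E\vG^{(t)}} = \sqrt{\frac{(\min(t,\tau_k)-T_\ell)(T_\ell+\gL-\max(t,\tau_k))}{\gL}}\,\norm{\vP^{(\tau_k+1)}-\vP^{(\tau_k)}}, \]
a tent-shaped function of $t$ peaking at $\tau_k$. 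Because the windows in $\cL$ shift by $\lfloor\gL/3\rfloor$, every true $\tau_k$ lies in the middle third of at least one window, on which this peak is at least $c\sqrt{\gL}\,\fS(\bsQ)$.

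Combining the two, on the event of the noise bound and under \eqref{eq_thm2_cond} the peak of $\norm{\vG^{(t)}}$ on each well-centered window exceeds Algorithm 1's threshold while no change-point-free window ever does, so $\hat K=K$ and a detection is produced near every $\tau_k$. For the localization radius, the argmax $\hat\tau$ on such a window satisfies $\norm{\E\vG^{(\tau_k)}}-\norm{\E\vG^{(\hat\tau)}}\le 2\max_t\norm{\bkt{\vG^{(t)}}}$, and since the tent function decays at rate $\sqrt{\gL}\,\fS(\bsQ)\cdot|t-\tau_k|/\gL$ near its peak, inverting this inequality with the noise bound produces the claimed localization radius. The principal obstacle is the first step: establishing the sharp uniform operator-norm concentration with the correct sparsity-adaptive constants and the two $|\cL|$-regimes of $\Psi_\mu$; the remaining bias--variance bookkeeping is routine once that concentration is in hand.
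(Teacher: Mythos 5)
Your overall route --- split $\vG^{(t)}$ as $\E\vG^{(t)}+\bkt{\vG^{(t)}}$, control the noise uniformly over windows and split points via Feige--Ofek / Le--Levina--Vershynin-type operator-norm concentration for the degree-trimmed matrices (this is precisely what the calibration of $\gC$ in \eqref{gC def_gen} and the two $|\cL|$-regimes of $\Psi_\mu$ in \eqref{Psi def_gen} are set up for), convert the data-driven threshold into a population one by a Chernoff bound on $\bar D$, and then read off detection and localization from the tent-shaped mean of the CUSUM --- is the same architecture the paper uses, and the lower bound is indeed just Lemma \ref{lower_local}. There is, however, a concrete error in your signal computation that makes the two halves of your argument mutually inconsistent. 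With the normalization actually used in \eqref{Gt_def} (the two blocks are \emph{averaged}, i.e.\ divided by $t$ and $\gL-t$, with prefactor $\sqrt{(t/\gL)(1-t/\gL)}$), a single change at $\tau_k=T_\ell+\theta$ gives, for $t\le\theta$ (in window-relative coordinates),
\[
\norm{\E\vG^{(t)}}=\frac{\gL-\theta}{\gL}\sqrt{\frac{t}{\gL-t}}\,\norm{\vP^{(\tau_k+1)}-\vP^{(\tau_k)}},
\]
whose peak at $t=\theta$ is $\bigl(\sqrt{\theta(\gL-\theta)}/\gL\bigr)\norm{\vP^{(\tau_k+1)}-\vP^{(\tau_k)}}\asymp\fS(\bsQ)$, \emph{not} $c\sqrt{\gL}\,\fS(\bsQ)$. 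Your tent formula is the one for the \emph{unnormalized} CUSUM $\sqrt{t(\gL-t)/\gL}\,(\bar A_{1:t}-\bar A_{t+1:\gL})$ and is too large by a factor of $\sqrt{\gL}$. Since your noise bound is (correctly) of order $\sqrt{d/(\gL\wedge\gk)}$, comparing it against a peak of $c\sqrt{\gL}\fS$ would yield the detection condition $\fS\gtrsim\sqrt{d}/\gL$ rather than the theorem's $\fS\gtrsim\sqrt{d/\gL}$; with the correct peak $\asymp\fS$ everything matches \eqref{eq_thm2_cond}, and the localization radius then follows as you describe because the tent decays at rate $\asymp\fS\,|t-\tau_k|/\gL$ near its kink.

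Two further gaps. First, the entire technical content of the proof --- the uniform spectral-norm concentration of $\sum_{s\in\cI}\bkt{\tilde\vA^{(s)}}$ after removing the $\gC$ highest-degree rows and columns, with the sparsity-adaptive constants that produce the two regimes of $\Psi_\mu$ and the error term $3\exp[-\mu(\gL\wedge\gk)\fd]$ --- is invoked rather than proved; you acknowledge this, but it is where essentially all of the work lies, so the proposal is a plan rather than a proof. Second, your claim that each $\cI_\ell$ contains at most one change point does not follow from the stated hypothesis: \eqref{eq_thm2_cond} reads $\gk>\fg(\bsQ)$ with $\gL\le\gk$, which permits a window of length $\gL$ to straddle two change points; you need $\gL\le\fg(\bsQ)$ (or to flag the apparent sign error) for the one-change-per-window reduction. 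Relatedly, you never specify how the per-window detections are consolidated into a single set of $\hat K$ estimates: overlapping windows all fire near the same $\tau_k$, and a window containing $\tau_k$ outside its middle third can cross the threshold at a point $u$ with $|u-\tau_k|$ of order $\gL/3$, which exceeds the claimed localization radius when $\fS(\bsQ)$ is large. Without a merging/selection rule, neither $\{\hat K=K\}$ nor the per-$i$ localization bound is actually established.
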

\begin{proof}
	\begin{enumerate}
		\item The proof follows from Lemma \ref{lower_local}.
		\item The proof is in the Appendix.
	\end{enumerate}
\end{proof}

\begin{rem}
	Note that in Algorithm 1, $\gL \leq \gk$, so $\gL\wedge \gk = \gL$ in Theorem \ref{multiple tau local}. Also, if $\fg(\bsQ) \le \gk = O(T)$ and $\gL = O(T)$, then, under the condition weaker condition of $\fS(\bsQ) \ge C\sqrt{\frac{d}{T}}$ (for some constant $C>0$), $|\hat{\tau} - \tau| = O(T)$ with high probability. However, if $\fg(\bsQ) \leq \gk = o(T)$ and $\gL = o(T)$, then, under the stronger condition of $\fS(\bsQ) \ge C\sqrt{\frac{d}{\gL}}$ (for some constant $C>0$), $|\hat{\tau} - \tau| = O(\gL)$ with high probability. So, as the signal strength $\fS(\bsQ)$ increases, the interval length of change point estimate, $\gL$, decreases, that is, the localization of the change point estimate becomes better.
\end{rem}

%

\subsubsection{Wild Binary Segmentation}
\label{sec:wbs}
Algorithm 2, which is presented in \textsection \ref{sec_wbs}, outputs $(\hat \tau, \gL, \cI)$, which are the estimates of the change point $\tau$ along with interval length $\gL$, and interval $\cI$ containing the estimated change point. The generating model is presented in \textsection \ref{sec_multiple}. The theoretical results on the performance of the change point estimate $\hat \tau$ is given in Theorems \ref{thm_wbs}.  
\begin{thm} \label{thm_wbs}
	Consider that Algorithm 2 is applied on $\bsA$ with cushions of length $\gk$ ($3\le \gk \le T$). There are constants $C_1, c_1, \zeta_0>0$ with $\Psi_\mu$ as in \eqref{Psi def_gen} for $\mu>0$, $d$ and $\fd$ as in \eqref{d def} such that if $\dW$ be the set $\bsQ \in \dS_n^T$ satisfying the properties $\cK(\bsQ) \ge 1$, and
	\begin{align}
	\label{eq_thm_wbs_cond}
	\gk > \fg(\bsQ) \text{ and } \fS(\bsQ) \geq (C_1+c_1\Psi_\mu)\left[\frac{d}{\gk}\left(\zeta+\frac{\log(M)}{\log(n)}\right)\right]^{1/2}
	\end{align}  
	then, for all $\zeta>\zeta_0$,
	\beqax
	\pr\left(|\hat{\tau_i} - \tau_i| \ge \gL \frac{C_1+c_1\Psi_\mu}{\fS(\bsQ)}\left[\frac{d}{\gk}\left(\zeta+\frac{\log(M)}{\log(n)}\right)\right]^{1/2}\;\forall\; i\in [K] \text{ and } \hat K=K\right) \\
	\le  2\frac{(M)^{9/\log(n)}}{\log(n)} n^{-\zeta}+3\exp\left[-\mu\gk\fd\right] + T\gk^{-1}\left(1 - \gk^2T^{-2}/9\right)^M. 
	\eeqax
\end{thm}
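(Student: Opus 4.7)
The argument proceeds by combining two essentially independent ingredients: (i) a probabilistic lemma guaranteeing that the random interval family $\cF_T^M$ covers every true change point well, and (ii) a deterministic CUSUM-concentration statement on each such ``good'' random interval, of exactly the kind already established in the course of proving Theorem \ref{multiple tau local}. The recursive structure of Algorithm 2 then reduces to a short induction over $k\in[K]$.

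\textbf{Step 1: covering by random intervals.} Call a drawn interval $[s_m,e_m]\in\cF_T^M$ \emph{$i$-good} if $\tau_i-\gk/3\ge s_m$ and $e_m\ge \tau_i+\gk/3$, so that $\tau_i$ sits in its interior with buffer at least $\gk/3$ on both sides; and call $\cF_T^M$ good if it contains an $i$-good interval for every $i\in[K]$. Because each $(s_m,e_m)$ is drawn uniformly with $e_m>s_m$ and $\fg(\bsQ)\ge \gk$ by assumption, the probability that a single draw is $i$-good for some fixed $i$ is at least $\gk^2/(3T)^2$, so a straightforward union bound over $i\in[K]\le T/\gk$ and a Bernoulli-trials computation give $\pr(\cF_T^M\text{ is good})\ge 1-T\gk^{-1}(1-\gk^2T^{-2}/9)^M$. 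This supplies the last term of the announced bound.

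\textbf{Step 2: CUSUM concentration on good intervals.} Condition on $\cF_T^M$ being good and, for each change point $\tau_i$, fix an $i$-good interval $(s_m,e_m)$ of length $\gL_m:=e_m-s_m\ge \gk$. On such an interval the analysis developed for Theorem \ref{multiple tau local} applies verbatim with the pair $(\gL,\gk)$ replaced by $(\gL_m,\gk)$: the trimmed cumulative matrix $\tilde\vA^{(s)}$ with the threshold $\gC$ of \eqref{gC def_gen} ensures that $\vG^{(t)}_m$ concentrates uniformly in $t$ around its population counterpart $\E \vG^{(t)}_m$. The population CUSUM has a peak at $\tau_i$ of order $\fS(\bsQ)\sqrt{(\tau_i-s_m)(e_m-\tau_i)/\gL_m}\gtrsim \fS(\bsQ)\sqrt{\gk}$ (since $\tau_i$ is buffered by $\gk/3$), and the hypothesis \eqref{eq_thm_wbs_cond} is precisely what is needed for this peak to beat the detection threshold $\Theta_\mu[\bar D\gk^{-1}(\zeta+6+\log M/\log n)]^{1/2}$. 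Simultaneously, the quadratic behaviour of the population CUSUM around its maximum, combined with the stochastic deviation bound, yields the localization estimate $|u_m-\tau_i|\le \gL_m(C_1+c_1\Psi_\mu)\fS(\bsQ)^{-1}[d\gk^{-1}(\zeta+\log M/\log n)]^{1/2}$. A union bound over the (at most) $M$ intervals and the trimming-failure probability $3\exp[-\mu\gk\fd]$ produces the first two terms of the final bound, the $M^{9/\log n}$ factor coming from the union.

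\textbf{Step 3: recursion and induction.} Consider the subroutine called on some $[s,e]\supseteq[\tau_{i-1}+1,\tau_{i+1}-1]$ that still contains at least one undetected change point. The $i$-good interval of Step 1 lies in $\cM_{s,e}$; by Step 2 it produces a candidate surviving the threshold with $|u_m-\tau_i|$ within the announced window. The greedy argmax step of Algorithm 2 then outputs some $u_0$ whose CUSUM exceeds the threshold, and by the same deterministic argmax/signal inequalities $u_0$ lies within the announced window of some true change point. Induction on $K$ applied to the two children $(s,u_0)$ and $(u_0+1,e)$ closes the argument, using the fact that after localization the remaining undetected change points are at distance at least $\fg(\bsQ)-|u_0-\tau|\ge \gk/2$ from $u_0$, so the recursion neither terminates prematurely on a subinterval containing a change point nor recurses indefinitely once all change points are exhausted.

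\textbf{Main obstacle.} The delicate point is controlling \emph{spurious} detections: on subintervals containing no true change point the noise-driven $\norm{\vG^{(t)}_m}$ must stay below the threshold uniformly over $M$ random intervals and all admissible $t$. This is exactly what the trimmed-adjacency concentration bounds of Step 2 deliver, and is the reason for the slightly awkward but essential factor $M^{9/\log n}n^{-\zeta}$; transferring those bounds from a fixed deterministic window system $\cL$ (as in Algorithm 1) to the random system $\cF_T^M$ is the one place where care is needed, and is handled by conditioning on $\cF_T^M$ and using $M$ in place of $|\cL|$ in the union bound.
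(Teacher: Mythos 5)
Your three-step decomposition matches the structure of the claimed bound term by term, and is surely the intended route: the coverage-failure probability from Step 1 is exactly the third term (a length-$\gk/3$ window for each endpoint gives $\gk^2/(9T^2)$ per draw, and $K\le T\gk^{-1}$ change points to cover), and running the Theorem \ref{multiple tau local} concentration analysis with $|\cL|$ replaced by $M$ gives the first two terms. Two points need attention, however.

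First, there is a genuine gap in Step 3. The interval $m^*$ selected by the greedy $\arg\max$ need not be the $i$-good interval constructed in Step 1; it can be an interval of length $\gL_{m^*}\gg\gk$ straddling several change points. Your localization bound from Step 2 scales with the length of the selected interval, so the assertions that ``$u_0$ lies within the announced window of some true change point'' and, more importantly, that $\fg(\bsQ)-|u_0-\tau|\ge\gk/2$ (needed so the recursion retains an adequate buffer around the remaining change points) do not follow from what you have written: under hypothesis \eqref{eq_thm_wbs_cond} the relative error factor $(C_1+c_1\Psi_\mu)\fS(\bsQ)^{-1}[d\gk^{-1}(\zeta+\log M/\log n)]^{1/2}$ is only guaranteed to be $\le 1$, so your bound reads $|u_0-\tau|\le\gL_{m^*}$, which is vacuous for long intervals. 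The standard wild-binary-segmentation fix --- either showing that any interval whose CUSUM exceeds the threshold must have its maximizer within a $\gk$-governed (not $\gL_{m^*}$-governed) distance of a change point, or restricting the competition to intervals of length comparable to $\gk$ --- is the actual crux of the recursion, and as written Step 3 asserts this rather than proving it.

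Second, in Step 1 you invoke ``$\fg(\bsQ)\ge\gk$ by assumption,'' whereas hypothesis \eqref{eq_thm_wbs_cond} as stated reads $\gk>\fg(\bsQ)$. The covering computation, the claim that each good interval contains exactly one change point with a $\gk/3$ buffer, and the recursion in Step 3 all require $\fg(\bsQ)\ge\gk$; you should flag the discrepancy (almost certainly a typo in the statement, consistent with the analogous condition in Theorems \ref{multiple tau detect} and \ref{multiple tau local}) rather than silently reversing the inequality, since with $\gk>\fg(\bsQ)$ taken literally a good interval could contain two change points and the Step 2 analysis would not apply as stated.
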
	

\begin{proof}
	The proof is in the Appendix.
\end{proof}

\bibliographystyle{imsart-nameyear}
\bibliography{extracted}

\end{document}